\newcommand{\cmark}{\ding{51}}%
\newcommand{\xmark}{\ding{55}}%
\newtheorem{theorem}{Theorem}
\newtheorem{proposition}[theorem]{Proposition}
\newtheorem{lemma}[theorem]{Lemma}
\newtheorem{corollary}[theorem]{Corollary}
\newtheorem{observation}[theorem]{Observation}
\newtheorem{example}{Example}
\newtheorem{remark}{Remark}
\newtheorem{conjecture}{Conjecture}
\DeclareMathOperator*{\argmax}{arg\,max}
\newcommand{\SR}[1]{}
\title{Algorithms for Stable Roommate with Externalities}
\author{Jing Leng\thanks{University of Leeds, UK. Email: jingleng112@gmail.com.} \and Sanjukta Roy\thanks{Indian Statistical Institute, Kolkata, India. Email: sanjukta@isical.ac.in}}
\date{}
\begin{document}

\maketitle

\begin{abstract}
In the roommate matching model, given a set of $2n$ agents and $n$ rooms, we find an assignment of a pair of agents to a room. Although the roommate matching problem is well studied, the study of the model when agents have preference over both rooms and roommates was recently initiated by \citet{chan2016assignment}. 
 We study two types of stable roommate assignments, namely, $4$-person stable (4PS) and $2$-person stable (2PS) in conjunction with efficiency and strategy-proofness. We design a simple serial dictatorship based algorithm for finding a 4PS assignment that is Pareto optimal and strategy-proof. However, the serial dictatorship algorithm is far from being 2PS. Next, we study top trading cycle (TTC) based algorithms. We show that variations of TTC cannot be strategy-proof or PO. Finally, as \citeauthor{chan2016assignment}~(\citeyear{chan2016assignment}) showed that deciding the existence of 2PS assignment is NP-complete,
we identify preference structures where a 2PS assignment can be found in polynomial time. 
\end{abstract}

\section{Introduction}
In the roommate matching problem, 
$2n$ agents (roommates) are assigned to $n$ rooms. Each room is assigned to two agents. This problem extends the Stable Marriage problem that was introduced in the seminal work by \citet{gale1962college}. The objective is to find a stable and disjoint pairing of agents such that no two individuals would prefer each other over their current matches.
It has inspired significant research in areas such as roommate matching \cite{irving1985efficient}, 3-dimensional matching \cite{mckay2021three}, and coalition formation \cite{knuth1976mariages}. We deviate from the classical roommate matching model in two ways.
First, the classical definition of stability may not be appropriate, e.g., when we assign property rights, as observed by \citet{alcalde1994exchange}. Suppose that the matching represents an assignment of persons to two-bedroom apartments. Then two people, say Alice and Bob, who prefer each other may not be able to deviate from the assignment because they may not find a new apartment that they could share. Instead, Alice may want to exchange room with Bob's partner. Thus, we may consider a matching to be stable if no two agents would prefer to have each other’s partner rather than the partner given by the matching. That is, a pair of agents would not prefer to swap their assigned positions in the matching -- this notion is known as exchange stability in the literature \cite{Cechlarova2002,CechlarovaManlove2005,chen2019reaching}. 

Second, most studies on roommate matching prioritize stability based on agents' mutual preferences, while often disregarding preferences over specific rooms. Extensions to three-dimensional matching typically focus on forming stable coalitions of size three, yet frequently overlook settings in which some entities (e.g., rooms) do not possess preferences.
This gap is particularly relevant in online marketplaces for short- and long-term accommodations, such as Airbnb, Roomi, Roomsurf, SpareRoom, etc. \cite{Airbnb,Roomi,RoomsSurf,SpareRoom}, where users can express preferences not only for roommates, but also for specific rooms.  To address this gap, recent studies on matching roommates and rooms have explored models that assume utilities over both rooms and roommates \cite{chan2016assignment,gan2023your,hosseini2024strategyproofmatchingroommatesrooms}.
In this new preference model, the definitions of stability need to be revisited. \citet{chan2016assignment} introduced two definitions of stability that incorporate agents' preferences for both other agents and rooms, considering two versions of stability. A stronger requirement called 2-person stability (2PS) (similar to exchange stability), and a weaker one called 4-person stability (4PS) were introduced. In a 2-person stable matching, no two agents prefer the room and roommate of the other agent over their own. In contrast, 4-person stability allows deviating agents to exchange positions only if the two non-deviating roommates consent to the exchange.


We study the two stability notions that extend the definition of exchange stability in the new preference model, in conjunction with two fundamental concepts from economics and game theory, namely, Pareto optimality (PO) and strategy-proofness (SP). 
Pareto optimality can be defined as the state in which resources are allocated as efficiently as possible so that improving one agent without worsening other(s) is not possible. Thus, 
Pareto optimality represents a measure of economic efficiency. 
Previous works~\cite{chan2016assignment,hosseini2024strategyproofmatchingroommatesrooms} have considered only social welfare, which is a strict subset of Pareto optimal solutions.
In this article, we extend our understanding of efficient roommate matching mechanisms under the requirement of being strategy-proof. Agents may sometimes misreport their preferences to influence the assignment process and secure a more favorable outcome. Mechanism design and, more broadly, game theory, aim to develop procedures where truthfully reporting preferences is the dominant strategy for agents~\cite{HurwiczReiter2006}.
A mechanism is strategy-proof if no agent can obtain a better room or roommate by misreporting their preference. 
It is known that the classical stability as defined by \citet{gale1962college} and exchange stability for the roommate matching problem are not compatible with Pareto optimality; neither is classical stability compatible with strategy-proofness. 

\subsection{Our Contribution}
 In this paper, we consider the problem of assigning roommates and rooms where the preferences of agents over
other agents and rooms are given by cardinal values and agent utilities are additive. We study two algorithms, serial dictatorship and top trading cycle for the stability notions, $2$-person stable (2PS) and $4$-person stable (4PS).

In contrast to known results, we present a simple serial dictatorship-based algorithm for finding a 4PS assignment that is Pareto optimal
and strategy-proof. We show that a solution produced by such an algorithm is far from being $2$-person stable.

We then further explore the connection of our problem to problems related to house allocation as we assume that rooms do not have preferences, akin to houses. An algorithm that has been extensively studied for its property of being Pareto optimal and strategy-proof is the Top Trading Cycle (TTC) mechanism. We show that a naive modification of TTC may not terminate in our model. 
First, we introduce a new idea of contractual top trading where, due to externality, a pair of agents and a room act as a coalition and thus, both agents in the coalition need to approve any trading. Although our trading cycle mechanism with contractual trading (CTTC) always terminates, it fails to satisfy stability, Pareto optimality, or strategy-proofness. We finally design a multi-level contractual trading cycle based algorithm (CTTCR) that produces a 4PS assignment but is not Pareto optimal or strategy-proof.
A summary of the properties of our algorithms is given in \Cref{tab:al_sum}.
There exist Pareto optimal and strategy-proof (but not stable) mechanisms for school choice~\cite{10.1257/000282803322157061}, which is based on the top trading cycle algorithm.
We leave it open whether there exists an appropriate modification of TTC that is Pareto optimal and strategy-proof for our roommate model. 
Finally, if the room or roommate preferences are given based on a set of yes/no questions, then it is more likely that an agent either likes or dislikes a room or a roommate, leading to binary valuations. We show that in symmetric, binary preferences, we can find a 2PS assignment -- which is a stronger condition than a 4PS assignment--by a simple swap based algorithm in polynomial time. The known results in the literature show the hardness of finding a 2PS assignment~\cite{chan2016assignment,chen2019reaching} even when agents are indifferent towards the rooms. Our work identifies the first simple preference structure when 2PS assignments are tractable using a simple and practical algorithm. However, it is not Pareto optimal or strategy-proof.

\begin{table}[t]
    \centering
    \begin{tabular}{|c | c c c c c|}
    \hline
         & \textbf{2PS} & \textbf{4PS} & \textbf{PO} & \textbf{SP} & \textbf{Valuation}\\
         \hline
    \textbf{SD}& \xmark  & \cmark & \cmark & \cmark & General\\
    &Th.\ref{thm:SD2PSblocking} & Lem.\ref{lem:SD4PS} & Lem.\ref{lem:SD_PE}& Lem. \ref{lem:SD4PS-SP}&\\
    \hline
    \textbf{CTTC}&\xmark&\xmark&\xmark&\xmark&General\\
    & Ex.\ref{eg:CTTCN4PS} & Ex.\ref{eg:CTTCN4PS} & Ex.\ref{eg:ttcParetoEg}& Ex.\ref{eg:CTTCnSP} &\\
    \hline
    \textbf{CTTCR }&\xmark&\cmark&\xmark&\xmark&General\\
    & Ex.\ref{eg:CTTCRn2PS} & Th.\ref{the:TTCRSW_4PS} & Ex.\ref{eg:ttcParetoEg}& Ex.\ref{eg:CTTCnSP} &\\
    \hline
        \textbf{Swapping} & \cmark & \cmark & \xmark & \xmark
    & Symmetric \& Binary \\
     &Th.\ref{the:SB_SW} & Th.\ref{the:SB_SW} & Ex.\ref{ex:noPObinaryswap}& Ex. \ref{ex:eg_SB_n_SP_2}&\\
    \hline
    \end{tabular}
    \caption{ Summary of Our Algorithms (Th., Lem. and Ex. refer to Theorem, Lemma, and Example, respectively.). }
    \label{tab:al_sum}
\end{table}

\subsection{Further Related Works}
 The assignment of indivisible resources among agents is a core issue in computational social choice, a field that bridges economics \cite{Thomson2011} and computer science \cite{Manlove2013}. 
Our research is related to several previous works but with some key differences. First, 
our problem model is to assign an agent pair rather than an individual to each resource, and an agent cares about both the resource and the mate it gets. The original roommate matching problem was proposed by \citet{gale1962college}, but the problem only considers the agents' preferences over their roommates. 
\citet{chan2016assignment} introduced another roommate market model considering the valuations of both room and roommate, and our research focuses on this model without considering room rent.
Second, previous work for the similar model focused on maximizing the social welfare or considered various room capacities \cite{huzhang2017online, li2019room}, but our work examines the stability and the strategy-proofness. 
In resource allocation, it is desired that the decision maker accounts for both the agents' preferences for the resources and their strategic actions.
\citet{hosseini2024strategyproofmatchingroommatesrooms} study strategy-proof algorithms for approximations of maximum social welfare under additive and Leontief utilities. Although they show a strategy-proof mechanism when utilities are binary and Leontief, our result is different as we study stability under additive utilities with binary and symmetric valuations and  general valuations. Moreover, as opposed to social welfare, we consider Pareto optimality (PO) as a measure of efficiency.

Our model can be observed as an extension of the house allocation problem where agents have preferences over houses and each agent is assigned to a house. Whereas, in our problem a pair of agents is assigned to a room. PO can be achieved in simpler settings like house allocation by the serial dictatorship (SD) procedure, also called picking-sequence~\cite{BouveretLang2014}. In this procedure, agents are arranged in a specific order, and each agent, one by one, selects their most preferred resource from the pool of remaining resources.  We extend this process and show that SD is powerful enough to even find a 4PS solution. However, the SD-based algorithm loses the PO property when there are ties.
This is undesirable because agents may have an incentive to avoid participating in the picking process. From the house allocation problems, the well-known Gale’s top trading cycle (TTC) procedure \cite{shapley1974cores} computes an allocation that is Pareto optimal.
Thus, we extend the top trading cycle algorithm to our model and examine their properties. 
 Both serial dictatorship and TTC are strategy-proof for the house allocation problem. Furthermore, \citet{Ma1994} showed that the TTC procedure is uniquely capable of being individually rational, Pareto efficient, and strategy-proof simultaneously. We show that it does not hold for multiple extensions of TTC in our model.
Other papers have studied the properties of SD and TTC \cite{morrill2024top,shapley1974cores,roth1982incentive,roth1977weak} when they are applied in a one-to-one assignment problem (e.g. house allocation). 
\citet{Fujita2018} considered a procedure similar to TTC, designed to select a Pareto optimal allocation within the core when preferences over sets of resources are restricted. When agents are allowed to exchange multiple resources, determining an allocation that is both individually rational and Pareto efficient is NP-hard for additive preferences \cite{Aziz2016}. 
 SD and TTC have been extended to an online setting, where participants enter and exit at different times \cite{lesca2025online}.
Sufficient conditions for strategy-proofness in dichotomous preferences have been studied \cite{Aziz2020}.
Our study further adjusts these two algorithms and examines whether their properties still hold in our roommate matching model. 


In contrast to our results for 2PS, finding a 2PS assignment is NP-complete even when each agent is indifferent between the room~\cite{chan2016assignment}. The reduction in \citet{chan2016assignment} and the results from \cite{chen2019reaching} imply that even the question of finding a ``path to stability'' using the number of divorces (swaps of agents) as the parameter is W[1]-hard and it is NP-complete even when each agent values only a bounded number of other agents and is indifferent between the rooms.

\section{Model}
\label{sec:model}
Extending the classical roommate matching model, \citet{chan2016assignment} introduces another roommate market model, in which $2n$ agents  $N = \{a_1, a_2, \dots, a_{2n}\}$  need to be assigned to $n$ rooms $R = \{r_1, r_2, \dots, r_n \}$, with two people in each room. 
An instance $\langle N, R, H, V \rangle$ of roommate matching includes a pair of matrices $H$ and $V$ with
$H=\{h_i | h_i: N\setminus\{i\} \mapsto \mathbb{R}_{\geq0}\}$,
where $h_{i}(j)$ is the happiness of person $i$ when living with $j$ as its roommate, and 
$V=\{v_i | v_i: R \mapsto \mathbb{R}_{\geq0}\}$,
where $v_{i}(r)$ is the valuation of person $i$ for room $r$.
We say a pair of agents $i,j \in N$ and a room $r$ is a triple $(i,j,r)$.
A roommate \emph{assignment} $ \mu $ is a $n$-sized set of disjoint triples, specifying an allocation of $n$ disjoint pair of agents to the $n$ rooms.  For a triple $(i,j,r)$ in assignment $\mu$, we use $\mu_N(i)$ and $\mu_R(i)$ to denote agent $j$ and room $r$ respectively. We use $U_\mu(i)$ to denote agent $i$'s utility $h_i(j) + v_i(r)$ in assignment $\mu$. 

We define the {\em swap operation} as follows.
Let $\mu$ be a roommate assignment 
 and let the triples $t_1=(a, b, r_1)$ and $t_2=(c, d, r_2)$ be in the assignment $\mu$ where $a,b,c,d \in N$ and $r_1,r_2 \in R$. We say an assignment $\mu'$ is created from $\mu$
 by \emph{swapping agents $a$ and $c$} if $\mu'$ is defined by deleting the triples  $t_1$ and $t_2$ from $\mu$ and adding the triples $(c, b, r_1)$ and $ (a, d, r_2)$. That is,  $\mu' = \mu \cup \{(c, b, r_1), (a, d, r_2)\} \setminus \{t_1,t_2\} $. We use $\mu_{a\leftrightarrow c}$ to denote $\mu'$, and thus, $U_{\mu(a\leftrightarrow c)}(a)$ denotes agent $a$'s utility in assignment $\mu'$, i.e. $U_{\mu(a\leftrightarrow c)}(a)= h_a(d) + v_a(r_2)$.

\noindent\textbf{Preference Structures}. There are 3 categories of preferences discussed in this paper - symmetric, binary, and general valuations. 
\emph{Binary} indicates that preference values are either $0$ or $1$, so for any agent $i, j \in N$ and any room $r \in R$, $h_i(j) \in \{0, 1\}, v_i(r) \in \{0, 1\}$. \emph{Symmetry} indicates that agents like each other mutually and equally, i.e., any two agents $i, j \in N$, it holds that $h_i(j) = h_j(i)$. 

\noindent \textbf{Stability}. 
%
%
%
 \citet{chan2016assignment} extended the concept of exchange stability\footnote{When agents' valuations are over the roommates and they are indifferent about the rooms, exchange stable matching refers to an assignment in which no two agents both prefer the others agent's roommate to their current roommate.} to include the preferences over the rooms. An assignment is 2-person stable, or 2PS for short, if no pair $(i, j)$ of agents living in different rooms can swap and increase both their utilities. That is, an assignment $\mu$ is 2PS if for any two agents $i$ and $j$, we have that $ h_i(j') + v_i(s) \leq h_i(i') + v_i(k)$ or $h_j(i') + v_j(k) \leq h_j(j') + v_j(s)$ where the triples $(i,i',k)$ and $(j,j',s)$ are in $\mu$.

 The definition of 4PS is an extension of the definition of 2PS by taking the roommates of $i$ and $j$ into consideration. 
 An assignment is 4-person stable, or 4PS for short, if no pair $(i, j)$ of agents living in different rooms can swap and make all 4 people in the two rooms increase their utilities. That is, an assignment $\mu$ is 4PS if for any two agents $i$ and $j$, we have that  $h_i(j') + v_i(s) \leq h_i(i') + v_i(k)$ or $h_j(i') + v_j(k) \leq h_j(j') + v_j(s)$ or $h_{i'}(j) \leq h_{i'}(i)$ or $h_{j'}(i) \leq h_{j'}(j)$ where the triples $(i,i',k)$ and $(j,j',s)$ are in $\mu$.
%
%
 We use the terms \emph{2PS blocking pair and 4PS blocking pair} to refer to the two agents breaking the stability conditions in an unstable assignment, respectively.

\citet{chan2016assignment} showed that it is NP-complete to decide the existence of a 2PS assignment and a solution satisfying the 4PS property always exists and can be found in time a $O(n^2)$.

\noindent\textbf{Social Welfare}. In the context of the roommate matching problem, \emph{Social Welfare} (SW)  is the sum of the utility of all participants in the matching. 
Thus, the social welfare of assignment $\mu$ is defined as:


$ SW(\mu) = \sum_{i\in N :\\(i,j,r) \in \mu}  h_i(j)+ v_i(r)$.



\noindent\textbf{Pareto Optimality}. Pareto Optimality (PO) is a concept from economics and describes the efficiency of resource allocation \cite{stiglitz1981pareto}. 
An assignment $\mu'$ Pareto dominates assignment $\mu$ if at least one agent's utility strictly increases and utility of no agent decreases in $\mu'$ as compared to $\mu$.
An assignment is Pareto Optimal (or Pareto Efficient) if no assignment Pareto dominates it. 

\noindent\textbf{Strategy-proofness}.
Strategy-proofness, also known as incentive compatibility, is a desirable property in matching mechanisms. 
A strategy-proof mechanism ensures that agents cannot benefit by misreporting their valuations.
%
We use $h_{-i}$ and $v_{-i}$ to denote the valuations of all agents in $N \setminus \{i\}$. Let $\mathcal{A}$ be a mechanism that computes a roommate assignment given an input instance $I = \langle N, R, H, V \rangle$. Recall, $U_{\mathcal{A}(I)}(i)$ denote the utility of agent $i$ in assignment $\mathcal{A}(I)$. A mechanism $\mathcal{A}$ is \emph{strategy-proof} if, for any agent $i$, for each compatibility value  $h_i'$ and room value  $v_i'$,  it holds that $U_{\mathcal{A}(I)}(i) \geq U_{\mathcal{A}(I')}(i)$ where $I' = \langle N,R, h_{-i}\cup \{h_i'\}, v_{-i} \cup \{v_i'\} \rangle$.

Therefore, no agent can improve their utility by misreporting their valuations. An algorithm that realizes a strategy-proof mechanism is called a strategy-proof algorithm. 
%
\subsection{Structural Properties}
We first explore the relation between PO and the stability notions.

\begin{proposition}\label{POimply4PS}
    Every Pareto optimal assignment is a 4PS assignment but not all 4PS assignments are Pareto optimal.
\end{proposition}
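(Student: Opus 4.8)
The plan is to prove the two directions separately. For the forward direction---every Pareto optimal assignment is 4PS---I would argue the contrapositive: if $\mu$ is not 4PS, then it is not Pareto optimal. Suppose $(i,j)$ is a 4PS blocking pair with triples $(i,i',k)$ and $(j,j',s)$ in $\mu$. By the definition of 4PS, all four defining inequalities must be strictly violated, so $h_i(j')+v_i(s) > h_i(i')+v_i(k)$, $h_j(i')+v_j(k) > h_j(j')+v_j(s)$, $h_{i'}(j) > h_{i'}(i)$, and $h_{j'}(i) > h_{j'}(j)$. Performing the swap $\mu_{i\leftrightarrow j}$, each of the four agents $i,j,i',j'$ is strictly better off: note that the room valuations of $i'$ and $j'$ are unchanged, so their improvement is governed purely by the roommate terms $h_{i'}(j)>h_{i'}(i)$ and $h_{j'}(i)>h_{j'}(j)$. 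Every agent outside these two triples keeps the same room and roommate, so no utility decreases. Hence $\mu_{i\leftrightarrow j}$ Pareto dominates $\mu$, and contrapositively PO implies 4PS.

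For the converse, I would exhibit a concrete 4PS assignment that is Pareto dominated. The key observation is that 4PS only forbids swaps in which \emph{all four} affected agents improve strictly, whereas Pareto domination only requires weak improvement for everyone together with strict improvement for at least one. Thus a swap that strictly benefits two agents while leaving the other two exactly indifferent is a Pareto improvement without being a 4PS blocking pair. I would realize this with $4$ agents and $2$ rooms: take $\mu=\{(a_1,a_2,r_1),(a_3,a_4,r_2)\}$ and choose valuations so that $a_1$ strictly prefers the pair $(a_4,r_2)$ and $a_3$ strictly prefers $(a_2,r_1)$, while $a_2$ and $a_4$ value all candidate roommates equally (and all agents are indifferent across rooms). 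The swap $\mu_{a_1\leftrightarrow a_3}$ then strictly raises the utilities of $a_1$ and $a_3$ and leaves $a_2,a_4$ unchanged, so $\mu$ is not Pareto optimal.

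The step I expect to be the real obstacle is verifying that this $\mu$ is genuinely 4PS, i.e. that \emph{no} swap is a 4PS blocking pair, not merely the one producing the Pareto improvement. With two rooms there are only the four cross-room swaps $a_1\leftrightarrow a_3$, $a_1\leftrightarrow a_4$, $a_2\leftrightarrow a_3$, $a_2\leftrightarrow a_4$ to examine, and for each I would check that some affected agent fails to strictly improve. For the Pareto-improving swap, the indifference of $a_2$ already satisfies the blocking-preventing clause $h_{a_2}(a_3)\le h_{a_2}(a_1)$; for the remaining swaps I would fix the $h_{a_1}(\cdot)$ and $h_{a_3}(\cdot)$ values so that the relevant deviating agent gains nothing. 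Throughout, the care required is to keep the strict-versus-weak inequalities of the two definitions straight, since the entire separation between PO and 4PS lives precisely in that gap.
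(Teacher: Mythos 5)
Your forward direction is the same argument as the paper's (the paper phrases it as a contradiction, you as a contrapositive; the content, including the observation that $i'$ and $j'$ keep their rooms so only the roommate terms matter for them, is identical). Your converse, however, takes a genuinely different route. The paper's counterexample (its Example with four agents and two rooms) produces a Pareto improvement by permuting which \emph{pair} occupies which \emph{room} --- an operation that is not an agent swap at all, so 4PS, which only quantifies over swaps of two agents between rooms, simply cannot see it. Your counterexample instead keeps the improvement inside the class of agent swaps and exploits the strict-versus-weak gap: a swap in which two agents strictly gain and the other two are exactly indifferent Pareto-dominates $\mu$ but is not a 4PS blocking pair, since blocking requires all four to improve strictly. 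Both constructions are valid (your $\mu$ is indeed 4PS: in every one of the four cross-room swaps, either a swapping agent is $a_2$ or $a_4$ and cannot strictly improve, or the affected roommates $a_2,a_4$ cannot), and together they isolate two independent reasons 4PS is weaker than PO --- the restricted deviation class versus the strictness of the blocking condition. One caveat: your example essentially requires ties in the utilities, whereas the paper's works even under strict preferences over roommates (the slack there comes from the room dimension); if you wanted the counterexample to survive a strictness assumption, the paper's room-permutation idea is the more robust one. To make your sketch a complete proof you should also write down one explicit valuation table realizing the description, e.g.\ $h_{a_1}(a_4)=h_{a_3}(a_2)=1$ and all other $h$ and all $v$ equal to $0$.
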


\begin{proof}
    Suppose that an assignment $\mu$ is PO but not 4PS, and there is a 4PS blocking pair $(i, j)$. Let the triples containing the agents $i$ and $j$ in assignment $\mu$ be $t_1 = (i, i', r_1)$ and $t_2 = (j, j', r_2)$. According to the definition of 4PS, the swap of agent $i$ and $j$ increases the utilities of all four agents $i, i', j$, and $j'$. Moreover, the utilities of all other agents remain the same in assignment $\mu_{i\leftrightarrow j}$. 
    Then,  the assignment $\mu_{i\leftrightarrow j}$ Pareto dominates the assignment $\mu$, contradicting the Pareto optimality of $\mu$. 
    To show the second part of the statement, we show Example \ref{eg:4PSnPO} where all agents in a 4PS assignment can be improved. 
\end{proof}
The above proposition also implies that an assignment with maximum social welfare is 4PS. However, the converse is not true.

To analyze how far a 4PS assignment can be from Pareto optimality, we measure the number of agents who can improve without worsening others. Example \ref{eg:4PSnPO} shows that in the worst-case scenario, all agents in a 4PS assignment can be improved. Additionally, the social welfare of a certain 4PS assignment can be an arbitrarily small fraction of the maximum social welfare.


\begin{example}
\label{eg:4PSnPO}
    Consider an instance with four agents $a, b, c$ and $d$, two rooms $r_1$ and $r_2$, and the valuations given in \cref{tab:4PSnPO}. Consider assignment $\mu_1 = \{(a, b, r_1), (c, d, r_2)\}$. It is 4PS. However, an assignment $\mu_2 = \{(c, d, r_1), (a, b, r_2)\}$ can benefit all the agents simply by swapping their rooms. Thus, assignment $\mu_1$ is not PO.
    
    Furthermore, we consider an extreme case where agent $b$ has a large preference value for agent $c$. e.g., say, $h_b(c) = L$ for some $L \rightarrow \infty$. A 4PS assignment $\mu_3 = \{(b, c, r_1), (a, d, r_2)\}$ gives a maximum social welfare close to $\infty$, compared to arbitrarily small social welfare of $\mu_1$. 
\end{example}

Furthermore, \Cref{eg:n2PS_SP} shows that the stricter stability notation 2PS is incompatible with strategy-proofness. 

\begin{proposition}
2PS is incompatible with strategy-proofness. 
\end{proposition}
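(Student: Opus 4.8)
The plan is to establish the impossibility by exhibiting a small instance together with an associated manipulation, as deferred to Example~\ref{eg:n2PS_SP}. I would work in the smallest nontrivial setting, namely four agents and two rooms ($n=2$), since there the only nontrivial swap is the exchange between the two rooms, and the 2PS condition reduces to checking a single pair against the inequalities $h_i(j')+v_i(s)\le h_i(i')+v_i(k)$ and $h_j(i')+v_j(k)\le h_j(j')+v_j(s)$. The aim is to design valuations so that the instance admits two distinct 2PS assignments that some agent, say $a$, ranks differently, and then to arrange a companion instance in which $a$ misreports so as to steer any 2PS mechanism toward $a$'s preferred outcome.

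Concretely, I would proceed in three steps. First, fix an instance $I$ and verify exhaustively, by testing the single relevant swap against the 2PS inequalities above, exactly which assignments are 2PS. Second, identify an agent $a$ and a misreported profile $(h_a',v_a')$ yielding the instance $I'=\langle N,R,h_{-a}\cup\{h_a'\},v_{-a}\cup\{v_a'\}\rangle$ whose 2PS assignment $\mu'$ gives $a$ strictly higher \emph{true} utility than the 2PS assignment $\mu$ that the mechanism must return on $I$. Third, invoke strategy-proofness: it demands $U_{\mathcal{A}(I)}(a)\ge U_{\mathcal{A}(I')}(a)$, yet any mechanism that always outputs a 2PS assignment returns $\mu$ on $I$ and $\mu'$ on $I'$, so, since $a$ strictly prefers $\mu'$ under its true preferences, we get $U_{\mathcal{A}(I)}(a)<U_{\mathcal{A}(I')}(a)$, a contradiction.

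An alternative, single-instance route may be cleaner: take one instance $I$ possessing exactly two 2PS assignments $\mu_1,\mu_2$ that differ in the utility assigned to some agent. A strategy-proof 2PS mechanism must commit to one of them on $I$; I then split into two cases according to the mechanism's choice and, in each case, construct a one-agent misreport making the agent disadvantaged by the chosen assignment strictly better off, while keeping the manipulated instance's 2PS outcome fixed. Either formulation reduces the claim to a finite verification.

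The main obstacle is controlling the 2PS outcome of the manipulated instance: I must ensure that after $a$'s misreport \emph{every} 2PS assignment of $I'$ gives $a$ strictly more true utility than $\mu$, so the conclusion survives any internal tie-breaking of the mechanism. Because 2PS is only a pairwise-swap condition, suppressing unwanted 2PS assignments means engineering the off-diagonal valuations so that the ``wrong'' swaps remain jointly profitable and hence block those assignments; aligning all these inequalities simultaneously, while keeping $a$'s true ranking of the two candidate assignments strict, is the delicate bookkeeping the example must pin down.
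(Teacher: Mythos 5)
Your ``alternative, single-instance route'' is exactly the paper's argument: Example~\ref{eg:n2PS_SP} exhibits a four-agent instance (agents indifferent between rooms) with precisely two 2PS assignments $\mu_1,\mu_2$, and whichever one the mechanism returns, the disadvantaged agent ($b$ for $\mu_1$, $d$ for $\mu_2$) misreports so that the other assignment becomes the unique 2PS outcome, yielding a strict true-utility gain. The delicate point you flag --- making the manipulated instance's 2PS assignment unique so the conclusion survives tie-breaking --- is indeed what the paper's concrete valuations arrange, so the plan is correct and essentially the same as the paper's.
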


\begin{table}[t]
    \centering
    \begin{minipage}{0.48\textwidth}
        \centering
        \begin{tabular}{|m{3mm}|m{3mm}|m{3mm}|m{3mm}|m{3mm}|m{3mm}|m{3mm}|}
        \hline
             & $a$ & $b$ & $c$ & $d$ & $r_1$ & $r_2$\\
             \hline
            $a$ &--{}{} & 1 & 0 & 0 &0 & 1 \\
            \hline
            $b$  &0 &--{}{}& $\infty$ & 0 & 0 & 1\\
           \hline
            $c$ & 0 & 0 &--{}{}& 1& 1 & 0 \\
            \hline
           $d$ & 0 & 0 & 1 &--{}{}& 1 & 0  \\
           \hline
        \end{tabular}
        \caption{Valuations: an example where a 4PS assignment is not PO}
    \label{tab:4PSnPO}
    \end{minipage}
    \hfill
    \begin{minipage}{0.48\textwidth}
        \centering
        \begin{tabular}{|m{3mm}|m{3mm}|m{3mm}|m{3mm}|m{3mm}|m{3mm}|m{3mm}|}
            \hline
                 & $a$ & $b$ & $c$ & $d$ \\
                 \hline
                $a$ &--{}{} & 3 & 2 & 1 \\
                \hline
                $b$  &2 &--{}{}& 3 & 1 \\
               \hline
                $c$ & 3 & 2 &--{}{}& 1\\
                \hline
               $d$ & 2 & 1 & 3 &--{}{}\\
               \hline
            \end{tabular}
            \caption{Valuations: an example demonstrating the incompatibility of 2PS and strategy-proofness}
        \label{tab:n2PS_SP}
    \end{minipage}
\end{table}

\begin{example}
    \label{eg:n2PS_SP}
    Consider an instance with four agents $a, b, c$ and $d$, and the valuations given in \cref{tab:n2PS_SP}. We suppose that the agents are indifferent between rooms and omit the valuations for rooms. Among all possible assignments, there are two 2PS assignments $\mu_1 = \{(a, b), (c, d)\}$ and $\mu_2 = \{(a, d), (b, c)\}$. However, in assignment $\mu_1$, agent $b$ can gain by lying (e.g. making $v_b(a)=1$ and $v_b(d)=2$), so that the algorithm will output the only 2PS assignment $\mu_2$. Similarly, in assignment $\mu_2$, agent $d$ can gain by lying so that $\mu_1$ is the only 2PS assignment. 
    \end{example}

\noindent
\textbf{Organization of the paper.} In \cref{sec:DS_LS}, we first examine the algorithms in \cite{chan2016assignment} for maximizing social welfare and for 4PS. We provide examples to show that they are not strategy-proof. In \cref{sec:SD} and \ref{sec:TTC}, we modify the classical Serial Dictatorship algorithm and Top Trading Cycle algorithm to fit our roommate matching model. Additionally, we examine the properties of the newly developed algorithms, including stability and efficiency. Finally, in \cref{sec:sym_bin}, we consider a specific structure where valuations are symmetric and binary, and design an algorithm to produce a 2PS assignment.   

\section{An algorithm for 4PS Assignment}
\label{sec:DS_LS}
\citet{chan2016assignment} developed an algorithm to produce a 4PS assignment that has social welfare at  least $2/3$ of the optimal. It uses first an algorithm called Double Matching (DM) and then the Local Searching (LS) algorithm running in time $O(n^3)$  and $O(n^5)$, respectively. 
Although Chan et al. \cite{chan2016assignment} designed a greedy algorithm following the serial dictatorship mechanism to find a 4PS solution efficiently, we provide a counterexample to prove that it is not strategy-proof.
\begin{restatable}{proposition}{propDM}
\label{prop:DM}
    Double Matching and Local Searching algorithms in \cite{chan2016assignment} are not strategy-proof.
\end{restatable}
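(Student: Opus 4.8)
The plan is to disprove strategy-proofness by exhibiting an explicit instance together with a profitable misreport. Since strategy-proofness demands $U_{\mathcal{A}(I)}(i) \ge U_{\mathcal{A}(I')}(i)$ for \emph{every} agent $i$ and \emph{every} deviation, a single instance on which one agent strictly benefits from lying suffices to refute it for the Double Matching (DM) and Local Searching (LS) pipeline of \cite{chan2016assignment}. I would first pin down the deterministic behavior of the two phases: DM builds an initial assignment by greedily (serial-dictatorship style) committing to the pair and room that looks best under the \emph{reported} values, and LS then performs welfare-improving swaps until the assignment is 4PS. To make the trace unambiguous I would fix the algorithm's tie-breaking rule (say, by agent and room index) and design the valuations so that all the decisive comparisons are strict, so that each run's output does not depend on incidental tie-breaking.

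The core idea is to exploit the fact that DM optimizes a global, pairwise objective rather than any single agent's utility. I would construct a small instance -- four agents $a,b,c,d$ and two rooms is the natural minimal size -- in which, under truthful reporting, the greedy step of DM commits agent $i$ to a mediocre roommate--room combination because some other pairing momentarily maximizes the reported happiness sum. I then let $i$ misreport: either inflating $h_i$ toward a more desirable partner so that the pair $(i,\text{partner})$ becomes the locally best choice DM commits to, or deflating $h_i$ toward the mediocre partner so that DM avoids that pairing and $i$ is subsequently matched with a better partner and room. Because the two executions differ only in $i$'s reported row, the remainder of the instance can be tuned so that the two runs produce two clearly distinct assignments, laid out as a valuation table in the style of \Cref{tab:n2PS_SP}.

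Finally I would verify, by tracing both executions to completion, that $U_{\mathcal{A}(I')}(i) > U_{\mathcal{A}(I)}(i)$, where utility is always measured with respect to $i$'s \emph{true} values $h_i, v_i$. The main obstacle is the second phase: LS may undo the pairing that the misreport engineered during DM, so the example must be chosen so that the assignment DM outputs in each run is already 4PS (hence LS performs no swaps), or so that every LS swap it does perform preserves the utility gap for $i$. Concretely, I would keep the four agents mutually happy enough in the two target configurations that no 4PS blocking swap exists in either final assignment; this both terminates LS immediately and certifies that the compared outputs are genuinely the mechanism's outputs. Listing $i$'s utility before and after the deviation then completes the argument, refuting strategy-proofness of DM, of LS, and of their composition at once.
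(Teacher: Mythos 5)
Your high-level strategy --- exhibit one explicit instance with one profitable misreport, and arrange matters so that the Local Search phase does not disturb the comparison --- is exactly the paper's strategy, and your observation that the example must be chosen so that DM's output is already 4PS (so LS is a no-op) is a point the paper itself only handles implicitly. However, there is a genuine gap in how you propose to find the manipulation: you describe Double Matching as ``greedily (serial-dictatorship style) committing to the pair and room that looks best under the reported values,'' and you plan to engineer a misreport that changes which pair a greedy step ``commits to.'' That is not how DM works. DM computes a maximum-weight perfect matching $M_1$ on the agents (edge weights $h_i(j)+h_j(i)$) and a maximum-weight 1-2-bipartite matching $M_2$ between agents and rooms (edge weights $v_i(r)$), then decomposes $M_1\cup M_2$ into cycles of length $3l$ and, in each cycle with $l\ge 2$, deletes the edge class $W_{t^\ast}$ of minimum total weight among the three residue classes $W_0,W_1,W_2$ to split the cycle into triples. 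There is no sequential commitment step to manipulate, so the concrete construction you sketch has no target.

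The manipulable step is the $\min_t W_t$ tie among the three edge classes in the cycle-breaking phase, and the paper's counterexample attacks exactly that: on a single 6-agent, 3-room cycle with $W_0=7$, $W_1=6$, $W_2=9$, agent $a_1$ inflates $v_{a_1}(r_1)$ from $4$ to $6$, which leaves $M_1$ and $M_2$ unchanged (the edge $(a_1,r_1)$ is already in $M_2$ and only becomes more attractive) but raises $W_1$ to $8$, so the algorithm now deletes $W_0$ instead of $W_1$ and $a_1$ lands in a strictly better triple under its \emph{true} values. To repair your proof you would need to (i) state the actual two-matching-plus-cycle-decomposition structure of DM, (ii) identify the $W_t$ comparison as the step an agent can tilt, and (iii) check that the misreport does not change the optimal matchings $M_1,M_2$ themselves --- a side condition your sketch never confronts because it is looking at the wrong algorithm. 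Your proposed 4-agent, 2-room instance is also risky: the interesting behavior only occurs in a cycle with $l\ge 2$, and while a 6-edge cycle is in principle attainable there, you would have to force $M_1\cup M_2$ to form a single such cycle, which is easier to arrange (and verify) at the 6-agent scale the paper uses.
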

\Cref{eg:doublematching} in the appendix proves the above proposition.

\section{Serial Dictatorship (SD)}
\label{sec:SD}
The lack of the strategy-proofness of the previous algorithm inspires us to study Serial Dictatorship (SD) algorithm in our model. It has been extensively used to design strategy-proof mechanisms for various allocation problems, including house allocation \cite{abdulkadirouglu1998random}, object allocation \cite{svensson1999strategy}, and facility location \cite{caragiannis2024truthful}.
It is well known that in most cases with two-sided preferences, stability is not compatible with 
 strategy-proofness \cite{r82}. In contrast, we show that a strategy-proof 4PS mechanism can be designed using a modification of the serial dictatorship algorithm.

\subsection{SD Algorithm}
We extend the classical SD algorithm for our model. For a given instance $\langle N,R,H,V\rangle$, our algorithm works as follows:
\begin{enumerate}
    \item Assignment Order: Fix an order of the agents, denoted as a sequence $\sigma = (a_1, a_2, \dots, a_{2n})$. 
    \item Sequential Choice: The first agent in the order $\sigma$ selects its most preferred agent and room. The paired agents are allocated to the selected room, and they are removed from the market. The process continues with the next agent in the market according to the order $\sigma$ until all agents and rooms have been allocated and removed. 
\end{enumerate}
The algorithm is also presented in \Cref{alg:SD4PS} in the appendix where we show in \Cref{lem:SD_PE,lem:SD4PS-SP} that the algorithm produces a Pareto optimal assignment for strict utilities and is strategy-proof.
Apart from being strategy-proof and PO, serial dictatorship produces a 4-person stable solution. We give the theorem and the proof below. 
%

\begin{lemma}\label{lem:SD4PS}
     The serial dictatorship algorithm outputs is $4$-person stable assignment.
\end{lemma}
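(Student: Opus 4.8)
The plan is to prove the contrapositive-flavored statement directly: take the assignment $\mu$ produced by serial dictatorship and show it admits no 4PS blocking pair. The central idea is that the defining feature of the SD outcome is that each agent, at the moment it is processed, picks its globally most preferred available (roommate, room) combination from whatever remains. So I would exploit an ordering/priority argument based on the sequence $\sigma$.

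First I would set up notation. Fix the order $\sigma = (a_1,\dots,a_{2n})$ and let $\mu$ be the SD output. For any agent $x$, let its \emph{pick time} be the position in $\sigma$ at which $x$ either actively chose or was chosen. The key observation is: when the earliest-in-$\sigma$ agent of a triple is processed, it selects the pair $(\text{roommate}, \text{room})$ maximizing $h_x(\cdot)+v_x(\cdot)$ over all still-available agents and rooms, and in particular this choice is at least as good as any pair it could form with agents/rooms that are still available at that time.

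Next, the main step. Suppose toward a contradiction that $(i,j)$ is a 4PS blocking pair, with triples $(i,i',k)$ and $(j,j',s)$ in $\mu$, and (by the negation of the 4PS condition) all four inequalities are strict in the beneficial direction: in particular $h_i(j')+v_i(s) > h_i(i')+v_i(k)$ and $h_j(i')+v_j(k) > h_j(j')+v_j(s)$, together with $h_{i'}(j)>h_{i'}(i)$ and $h_{j'}(i)>h_{j'}(j)$. I would now consider the agent among $\{i,i',j,j'\}$ whose pick time is earliest — call it the \emph{pivot}. The argument splits on who the pivot is, but the uniform idea is that the pivot, when processed, had available to it the partner and room that the blocking swap would give it (since all four relevant agents and both rooms $k,s$ are still unassigned at that earliest moment), yet the pivot ended up with a strictly worse bundle — contradicting the greedy choice. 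Concretely, if the pivot is $i$, then at $i$'s turn both $j'$ and room $s$ were available, so $i$ could have secured utility $h_i(j')+v_i(s)$, which strictly exceeds $h_i(i')+v_i(k)$, the utility $i$ actually realized — impossible for a greedy maximizer. The cases where the pivot is $i'$, $j$, or $j'$ are symmetric, each using the corresponding strict inequality from the blocking condition.

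The step I expect to be the main obstacle is justifying \emph{availability}: I must argue that at the pivot's pick time, the other three agents and the two rooms $k,s$ are genuinely still in the market. This follows because the pivot is the \emph{earliest} of the four, so none of the other three has been removed yet, and a room is removed only when it is assigned to a triple — so $k$ and $s$, being the rooms of $i,i',j,j'$ in the final $\mu$, are not yet taken when the pivot (the first of these agents) is processed. Care is also needed because an agent's "pick" might be passive (it was chosen by an earlier agent rather than choosing), but the pivot being earliest among the four rules out that an earlier one of the four grabbed it; and if some \emph{outside} agent earlier than the pivot had chosen the pivot, the pivot would not be free to be in its current triple — so the pivot's own pick time is where the greedy-choice bound applies. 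I would make this precise and then the contradiction closes the proof.
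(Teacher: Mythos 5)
Your proposal is correct and follows essentially the same argument as the paper: take the earliest of the four agents $\{i,i',j,j'\}$ in $\sigma$, observe that the roommate and room the blocking swap would give it were still available at its turn, and contradict the greedy choice. Your additional care about availability and about the pivot necessarily being an active chooser (since a passive pick would make the picker its roommate, hence an earlier member of the four) just makes explicit what the paper leaves implicit in its "analogous cases" remark.
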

\begin{sloppypar}
\begin{proof}
    We prove this by contradiction. Let $\mu$ be an assignment output by \Cref{alg:SD4PS}. Suppose that there is a 4-person stable blocking pair $(i, j)$, and the triples involved in the assignment are $(i, i', r)$ and $(j, j', s)$. According to the definition of 4PS blocking pair, the swap of agent $i$ and agent $j$ can increase the utilities for all four agents $i, i', j$, and $j'$, i.e., $h_i(j') + v_i(s) > h_i(i') + v_i(r)$, $h_j(i') + v_j(r) > h_j(j') + v_j(s)$, $h_{i'}(j) > h_{i'}(i)$, and $h_{j'}(i) > h_{j'}(j)$. We show that the agent who precedes the others in $\sigma$ cannot strictly improve.
    
    We assume that $i$ precedes $i'$, $i$ precedes $j$, and $i$ precedes $j'$ in order $\sigma$, i.e., agent $i$ has the highest priority in selecting among these $4$ agents. In the assignment process, as agent $i$ chooses agent $i'$ over $j$, and chooses room $r$ over $s$, we can infer that $h_{i}(i') \geq h_i(j')$, and $v_{i}(r) \geq v_{i}(s)$. Thus, $h_{i}(i') + v_{i}(r) \geq h_{i}(j') + v_{i}(s)$, which contradicts the existence of the 4PS blocking pair. The cases where agents $i',j$, or $j'$ come first in the ordering $\sigma$ are analogous. Thus, the assignment produced by the algorithm is 4-person stable.
\end{proof}\end{sloppypar}

Thus, consolidating the lemmas, we get the following result.

\begin{theorem}
The serial dictatorship algorithm is strategy-proof, $4$-person stable,  Pareto optimal when the utilities are strict, and runs in time $O(n^2)$.
\end{theorem}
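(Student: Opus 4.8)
The plan is to assemble the final theorem by simply collecting the four properties, each of which has already been established (or will be) as a separate lemma, and verifying that none of the constructions conflict. The serial dictatorship algorithm described in \Cref{sec:SD} produces a single assignment from a fixed order $\sigma$, so I would argue each property independently for that same assignment.

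First I would invoke \Cref{lem:SD4PS} directly for $4$-person stability. Second, for Pareto optimality under strict utilities, I would cite \Cref{lem:SD_PE}; here the hypothesis of strict utilities is essential, because (as the introduction warns) the SD procedure loses the PO property when there are ties. Third, strategy-proofness follows from \Cref{lem:SD4PS-SP}. The only genuinely new content in the theorem statement is the running time, so I would spend the bulk of the argument there.

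For the $O(n^2)$ running time, I would analyze the sequential choice step. There are $2n$ agents, processed one at a time according to $\sigma$. When agent $a_k$ is processed, it must select, among the remaining agents and the remaining rooms, the pair $(j, r)$ maximizing $h_{a_k}(j) + v_{a_k}(r)$. Since utilities are additive, this maximization separates: $a_k$ picks its most-preferred available roommate (an $O(n)$ scan over at most $2n$ candidates) and, independently, its most-preferred available room (an $O(n)$ scan over at most $n$ rooms). Each agent's turn thus costs $O(n)$, and since each turn either allocates two agents and one room or is skipped for an already-allocated agent, there are $O(n)$ turns overall, giving a total of $O(n^2)$.

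I expect no genuine obstacle, since this is a consolidation theorem; the main thing to be careful about is the separability observation that makes each agent's selection step run in $O(n)$ rather than $O(n^2)$ (which would inflate the total to $O(n^3)$). The key point is that additivity of utilities lets an agent optimize roommate and room choices independently, so there is no need to examine all (roommate, room) pairs jointly. I would close by noting that the strict-utility hypothesis is attached only to the Pareto optimality clause, while $4$-person stability, strategy-proofness, and the time bound hold for general utilities.
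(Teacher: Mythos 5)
Your proposal matches the paper's treatment: the theorem is stated there as a consolidation of Lemma~\ref{lem:SD4PS}, Lemma~\ref{lem:SD_PE}, and Lemma~\ref{lem:SD4PS-SP}, with the $O(n^2)$ bound obtained exactly as you describe ($2n$ turns, each an $O(n)$ scan for the best remaining roommate and the best remaining room separately, which is how Algorithm~\ref{alg:SD4PS} is defined). Your added remarks on separability and on the strict-utility hypothesis applying only to the PO clause are consistent with the paper and introduce no gap.
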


\paragraph{2-Person Stability}
Serial dictatorship does not guarantee a 2-person stable assignment. 
%
%
%
We evaluate the worst-case scenario when applying SD with regard to 2-person stability. We give the following proof, and  \Cref{eg:SD2PSblocking} in the appendix presents a tight example. 

\begin{theorem}\label{thm:SD2PSblocking}
    The serial dictatorship algorithm (\Cref{alg:SD4PS}) produces an assignment with at most  $(n^2 - n)$   $2$-person blocking pairs. 
\end{theorem}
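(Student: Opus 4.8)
The plan is to exploit the sequential, revealed-preference structure of serial dictatorship and to bound the blocking pairs by pairing up the $n$ triples. First I would fix the output assignment $\mu$ and label its triples $t_1, t_2, \dots, t_n$ in the order in which \Cref{alg:SD4PS} creates them, writing $t_\ell = (x_\ell, y_\ell, r_\ell)$ where $x_\ell$ is the agent that makes the selection at step $\ell$ (the current dictator) and $y_\ell$ is the roommate it picks. Since only agents living in different rooms can form a 2PS blocking pair, the number of candidate pairs is $\binom{2n}{2} - n = 2(n^2 - n)$; it therefore suffices to certify at least $n^2 - n$ of them as non-blocking.

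The key step is a revealed-preference claim: for every $\ell < m$, the dictator $x_\ell$ cannot belong to a 2PS blocking pair together with either agent of the later triple $t_m$. The reason is that the agents $x_m, y_m$ and the room $r_m$ are removed from the market only at step $m$, so at the earlier step $\ell$ all three are still available to $x_\ell$. Because $x_\ell$ selected $(y_\ell, r_\ell)$ as its most preferred available (roommate, room) combination, we get $h_{x_\ell}(y_\ell) + v_{x_\ell}(r_\ell) \ge h_{x_\ell}(y_m) + v_{x_\ell}(r_m)$ and $h_{x_\ell}(y_\ell) + v_{x_\ell}(r_\ell) \ge h_{x_\ell}(x_m) + v_{x_\ell}(r_m)$. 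Swapping $x_\ell$ with $x_m$ would leave $x_\ell$ with roommate $y_m$ and room $r_m$, while swapping $x_\ell$ with $y_m$ would leave $x_\ell$ with roommate $x_m$ and room $r_m$; by the two inequalities, in neither case does $x_\ell$ strictly improve, so $x_\ell$ is in no blocking pair with $x_m$ or with $y_m$.

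With the claim in hand I would finish by counting over the $\binom{n}{2}$ unordered pairs of triples. For a pair $\{t_\ell, t_m\}$ with $\ell < m$ there are exactly four inter-room agent pairs, namely $\{x_\ell, x_m\}, \{x_\ell, y_m\}, \{y_\ell, x_m\}, \{y_\ell, y_m\}$, and the claim kills the two that contain the earlier dictator $x_\ell$. Hence at most two of the four can block, giving at most $2\binom{n}{2} = n^2 - n$ blocking pairs in total. Equivalently, the claim certifies $2\binom{n}{2} = n^2 - n$ non-blocking inter-room pairs among the $2(n^2 - n)$ candidates, leaving at most $n^2 - n$.

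I expect the delicate point to be the availability argument underlying the claim: one must verify that every component of a later-formed triple is genuinely still in the market when an earlier dictator chooses, which is immediate from the step-by-step removal but can be obscured by the fact that a chosen roommate $y_m$ may appear early in the order $\sigma$ yet never act as a dictator. The only other subtlety is that a single selection by $x_\ell$ must bound \emph{both} possible swap targets in $t_m$, so that the same pair of revealed-preference inequalities disposes of $\{x_\ell, x_m\}$ and $\{x_\ell, y_m\}$ simultaneously; the remainder is routine counting, and the tightness is witnessed separately by \Cref{eg:SD2PSblocking}.
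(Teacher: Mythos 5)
Your proposal is correct and follows essentially the same route as the paper: both identify the earlier dictator's revealed preference as ruling out the two inter-room pairs containing that dictator, and both then count $2\binom{n}{2}=n^2-n$ surviving candidate pairs. Your write-up is if anything more explicit than the paper's about why a single selection by $x_\ell$ simultaneously disposes of the pairs $\{x_\ell,x_m\}$ and $\{x_\ell,y_m\}$, which the paper leaves implicit.
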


\begin{proof}
Let the assignment output by the \Cref{alg:SD4PS} be $\{(a_1, a_2, r_1)$, $(a_3, a_4, r_2)$, \dots, $(a_{2n-1}, a_{2n}, r_n)\}$. Let agent $a_{2j-1}$ denotes the dominant agent in the $j$th triple (i.e., agent $a_{2j-1}$ selects his roommate and room to form the $j$th triple) for $j \in [n]$. For agent $a_{2j-1}$ and his roommate $a_{2j}$, the following holds:
\begin{itemize}\looseness-1
   \item For any agent $a_{2i-1}$ with a higher priority than agent $a_{2j-1}$ (i.e. $i $ precedes $j$ in $\sigma$), pairs $(a_{2i-1}, a_{2j-1})$ and $(a_{2i-1}, a_{2j})$ cannot be blocking.
   \item For any agent $a_{2k-1}$ with a lower priority than agent $a_{2j-1}$ (i.e. $j$ precedes $k$ in $\sigma$), pairs $(a_{2j-1}, a_{2k-1})$ and $(a_{2j-1}, a_{2k})$ cannot be blocking. 
\end{itemize}
\begin{figure}[t]
        \centering \includegraphics[width=0.3\textwidth]{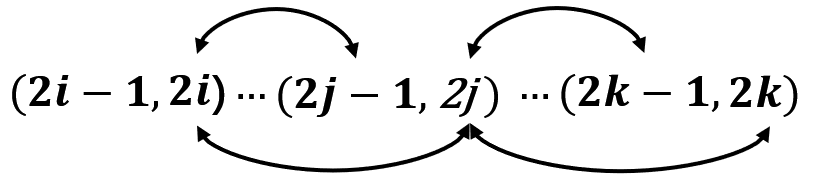}
    \caption{Possible Blocking Pairs in \Cref{thm:SD2PSblocking}}
    \label{fig:blockingpair}
        \vspace{5mm}
\end{figure}
  
In the worst-case scenario, there exists blocking pairs $(a_{2i}, a_{2j-1})$, $(a_{2i}, a_{2j})$, $(a_{2j}, a_{2k-1})$ and $(a_{2j}, a_{2k})$, as shown in figure \ref{fig:blockingpair}. 
Thus, for triple $(a_{2j-1}, a_{2j}, r_j)$, there are $j-1$ triples before and $n-j$ triples after, making it $n-1$ pairs in total. As a result, there are $2 \times (n-1) = 2n-2$ blocking pairs involving agent $a_{2j-1}$ or $a_{2j}$. For all the agents, the number of blocking pairs is:
 $\sum_{j=1}^{n}2(n-j) = n^2 - n$.

Therefore, by applying the serial dictatorship algorithm, the output assignment has at most $(n^2 - n)$ 2PS blocking pairs.
\end{proof}

\section{Top Trading Cycle (TTC)}
\label{sec:TTC}
 Top Trading Cycle (TTC) is another widely used algorithm to design strategy-proof mechanisms for finding efficient solutions for various allocation problems such as house allocation \cite{saban2013house}, school choice \cite{10.1257/000282803322157061}, course assignment~\cite{diebold2014course}, kidney exchange \cite{10.1162/0033553041382157}, etc. We define new variations of TTC for our model and analyze their properties. 
In this section, we begin by showing that a naive version of the TTC algorithm does not terminate for our model. Then, we introduce a variation called Contractual Top Trading Cycle (CTTC) that terminates, and modify it to a Contractual Top Trading Cycle with Removal (CTTCR) algorithm, producing a 4PS outcome. However, CTTCR is not Pareto optimal or strategy-proof.

\subsection{Naive TTC}
In this section, we first provide an intuitive variation of TTC for our model. 
%
In our problem model, the preferences over both other agents and rooms need to be considered.
Therefore, we redefine the way of constructing arcs in the trading graph. The algorithm (\Cref{alg:TTC1}) is split into two phases as follows:
\begin{enumerate}
    \item \textbf{Initialization}: for a given assignment $\mu$, create a directed graph $G_{\mu}$ where each agent is a vertex. Create an arc for each agent according to the following rule: an arc $(i, j)$ from agent $i$ to another agent $j$ is created when agent $i$ gets the largest utility after swapping with agent $j$. That is, $j \in \argmax_{s \in N} U_{\mu(i \leftrightarrow s)}(i)$ and $U_{\mu(i \leftrightarrow s)}(i) > U_{\mu}(i)$.
    \item \textbf{Cycle Detection and Trading}: we identify cycles in the directed graph $G_{\mu}$. A cycle $TC$ occurs when a sequence of agents $i_0, i_1, \dots, i_{k}$ exists such that agent $i_0$ has an arc to $i_1$, agent $i_1$ has an arc to $i_2$, so on, and finally, agent $i_{k}$ has an arc to $i_0$. The assignment $\mu$ is updated by {\em trading the agents according to the cycle} $TC$, i.e., agent $i_0$ is assigned to the room and roommate of agent $i_1$, and so on, finally, agent $i_{k}$ is assigned to the room and roommate of agent $i_0$. That is, we delete the triples in the cycle: $\mu = \mu \setminus \bigcup_{j=0}^{k} (i_j, \mu_N(i_j), \mu_R(i_j))$, and add the new triples:
    $\mu= \bigcup_{j=0}^k (i_j, \mu_N(i_{j+1}), \mu_R(i_{j+1})) \cup \mu$ where $j+1$ is taken mod $k+1$.  
    \item \textbf{Repeating}: repeat Phase 1 to 2 until there is no cycle in $G_{\mu}$. 
\end{enumerate}

\begin{algorithm}[t]
\SetAlgoNoLine
\caption{Naive Top Trading Cycle}
\label{alg:TTC1}
\KwIn{An instance $\langle N, R, H, V \rangle$ and an assignment $\mu$} 
\While{True}{
\textbf{Phase 1. Initialization}
    \\ Let $G_\mu = (V, E)$ be a directed graph where $V = N$, 
    arc $(i, j) \in E$ if and only if for $\{i,j\} \subseteq N$, $i \neq j$ and   
    $j \in \argmax_{a_s \in N} U_{\mu(i \leftrightarrow s)}(i)$  and $U_{\mu(i \leftrightarrow s)}(i) > U_{\mu}(i)$;
    \\ \textbf{Phase 2. Cycle Detection and Trading}
    \\ \lIf{a cycle exists in $G_{\mu}$}{
        \\ \quad Let $TC=i_0, i_1, \dots, i_k$ be a cycle in $G_\mu$\;
        \quad Update $\mu$ by trading according to the cycle $TC$}
    \lElse{\KwRet{The assignment $\mu$}}
    
}
\end{algorithm}


\begin{lemma}\label{lem:CTTCNTerminate}
    Naive TTC (\Cref{alg:TTC1}) may not terminate even when the agents are indifferent about the rooms and the utilities are strict.
\end{lemma}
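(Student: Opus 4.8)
The plan is to prove non-termination by exhibiting a single instance together with an initial assignment on which \Cref{alg:TTC1} loops forever. Since the lemma only claims that termination \emph{may} fail, one concrete counterexample suffices, and I would keep it minimal: four agents $\{a,b,c,d\}$, two rooms, every agent indifferent among rooms (so that $U_\mu(i)=h_i(\mu_N(i))$) and strict happiness values. With four agents there are only three roommate pairings, $\{ab,cd\}$, $\{ac,bd\}$ and $\{ad,bc\}$, and because every $U_{\mu(i\leftrightarrow s)}(i)$ depends only on the roommate (rooms being irrelevant), the trading graph $G_\mu$ is determined entirely by the current pairing. My goal is to choose the $h_i$'s so that the dynamics oscillate between two of these pairings indefinitely.

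The conceptual engine is exactly the externality that CTTC is later introduced to repair: when two agents swap, each of their \emph{former} roommates is reassigned a new partner without consent and can be made strictly worse off, so the displaced agents will themselves want to trade next. I would exploit this with a cyclic pattern of top choices, $a\to c\to d\to b\to a$, and fill in the remaining values so that in the pairing $\{ab,cd\}$ the only improving arcs are $a\to d$ and $d\to a$ (each of $a,d$ reaches its favourite by swapping into the other room, while $b,c$ are already matched to their favourite and generate no arc), and in the pairing $\{ac,bd\}$ the only improving arcs are $b\to c$ and $c\to b$. The point is that the two agents displaced by the first trade are precisely the two agents who generate the undoing cycle in the next configuration.

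The concrete steps are then: (i) write down explicit strict utilities realizing this pattern; (ii) at the start assignment $\{(a,b,r_1),(c,d,r_2)\}$, evaluate $U_{\mu(i\leftrightarrow s)}(i)$ for each agent $i$ and each $s$ in the opposite room, and verify that $G_\mu$ has the unique $2$-cycle $a\leftrightarrow d$; (iii) trade along it, reaching the pairing $\{ac,bd\}$, and check by the same computation that the unique cycle is now $b\leftrightarrow c$; (iv) trade again and verify that the pairing returns to $\{ab,cd\}$ (with the two rooms interchanged, which is immaterial under room-indifference). Since $G_\mu$ depends only on the pairing, the two configurations alternate forever, the \textbf{while} loop never reaches its ``no cycle'' exit, and at every step the two active agents strictly improve, so the strictness hypothesis is respected throughout.

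The part demanding the most care is verifying that at each visited configuration $G_\mu$ contains \emph{exactly} the intended $2$-cycle and no extra improving arc: \Cref{alg:TTC1} is free to select any cycle, so a stray arc could let it escape to the third pairing or to a stable assignment and thereby terminate. Concretely, after each trade I must check that the two agents who just improved are now sitting at their top choice (hence have no outgoing arc) while each of the two displaced agents still has a strictly improving swap pointing back toward the other room (hence the cycle recurs). Checking these inequalities for the chosen numeric values is routine, but it is precisely where the construction lives or dies.
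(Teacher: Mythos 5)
Your construction is correct and is essentially the paper's own proof: Example \ref{eg:ttcNotTerminatingEg} uses exactly this four-agent, room-indifferent instance in which the top choices form a $4$-cycle ("each agent's favourite likes it the least"), so that the two agents not yet at their top form a $2$-cycle, trading it displaces the other two from their tops, and the resulting $2$-cycle undoes the pairing forever. The verification point you flag -- that each trading graph contains exactly the intended $2$-cycle -- is indeed the only thing to check, and it holds automatically for any strict values realizing that rotation of top choices.
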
 
The next example proves the lemma.
\begin{figure}
\begin{minipage}{0.48\linewidth}
    \centering
    \begin{tabular}{|m{3mm}|m{3mm}|m{3mm}|m{3mm}|m{3mm}|m{3mm}|m{3mm}|}
    \hline
         & $a_1$ & $a_2$ & $a_3$ & $a_4$ & $r_1$ & $r_2$\\
         \hline
        $a_1$ & --{}{} & 3 & 6 & 10 & 1 & 1\\
        \hline
       $a_2$  &10 & --{}{} & 3 & 6 & 1 & 1\\
       \hline
        $a_3$ & 6 & 10 & --{}{} & 3 & 1 & 1\\
        \hline
       $a_4$ & 3 & 6 & 10 & --{}{} & 1 &  1 \\
       \hline
    \end{tabular}
    \captionof{table}{\small{Valuations in \Cref{eg:ttcNotTerminatingEg}}}\label{tab:ttcNotTerminatingEg}
    \end{minipage}\hfill
\begin{minipage}{0.4\linewidth}
        \centering  
        \frame{\includegraphics[width=0.8\textwidth]{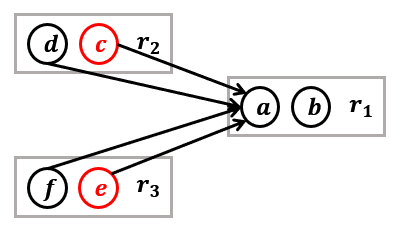}}
        \captionof{figure}{\Cref{eg:CTTCN4PS} showing CTTC is not 4PS}    
        \label{fig:CTTCn4PS}
        \vspace{5mm} 
\end{minipage}
 \end{figure}

\begin{example}\label{eg:ttcNotTerminatingEg} Consider an instance of roommate matching with four agents $a_1, a_2, a_3$, and $a_4$ and rooms $r_1$ and $r_2$. The valuations of the agents are shown in \Cref{tab:ttcNotTerminatingEg}.
    Assume that the initial assignment is $\{(a_1, a_2, r_1)$, $(a_3, a_4, r_2)\}$. We depict the new assignments constructed by our algorithm.
    \begin{itemize}\looseness-1
        \item  By trading according to the cycle $(a_1, a_3)$, we obtain assignment $\{(a_3, a_2, r_1), (a_1, a_4, r_2)\}$.
        \item In the second iteration, by trading according to  the cycle $(a_2, a_4)$, the assignment updates to $\{(a_3, a_4, r_1), (a_1, a_2, r_2)\}$ .
        \item In the third iteration, by trading according to  the cycle $(a_1, a_3)$, the assignment updates to $\{(a_1,$ $ a_4, r_1),$ $(a_2, a_3, r_2)\}$.
        \item Finally, we reach the initial assignment by trading according to the cycle $(a_2, a_4)$.
    \end{itemize}
    Thus, we conclude that TTC is not terminating in this example. 

    The pattern behind this example is that for each agent, its favorite agent likes it the least. For instance, for agent $a_1$, it likes agent $a_4$ the most ($h_{a_1}(a_4) = 10$), but agent $a_4$ likes $a_1$ the least ($h_{a_4}(a_1) = 3$). When the top trading cycle assigns a top choice to an agent, the agent being assigned always has the motivation to leave the pair and form a cycle with others.
\end{example}

\subsection{First Modification of Naive TTC: CTTC}
As the naive TTC variation fails to terminate, we further modify it and call it as Contractual Top Trading Cycle (CTTC). Specifically, in the first step of \Cref{alg:TTC1}, when constructing the trading graph, we add the restrictions by applying a contractual constraint that agent's current roommate must agree with the trading. 
\begin{enumerate}
    \item \textbf{Initialization}: for a given assignment $\mu$, create a directed graph $G_\mu$ where each agent is a vertex. Create an arc for each agent according to the following rule: an arc $(i, j)$ from agent $i$ to another agent $j$ is created when agent $i$ gets the largest utility $U_{\mu(i \leftrightarrow j)}(i)$ after swapping with agent $j$ (i.e., $j \in argmax_{s \in N } U_{\mu(i \leftrightarrow s)}(i)$ and $U_{\mu(i \leftrightarrow s)}(i) > U_{\mu}(i)$.), and at the same time, agent $j$'s roommate $\mu_N(j)$ likes agent $i$ at least as much as agent $j$ (i.e., $h_{\mu_N(j)}(i) \geq h_{\mu_N(j)}(j)$).
    Phase 2 remains the same as \Cref{alg:TTC1}. The pseudocode is presented in \Cref{alg:CTTC-incomplete} in the appendix. 
\end{enumerate}

We find that resolving a trading cycle improves social welfare, which will be used to prove that CTTC terminates. We use $ \mu_{TC}$ to denote the updated assignment upon resolving cycle $TC$ in $\mu$.
\begin{restatable}{lemma}{lemSW}
\label{lem:SW}
    If there exists a top trading cycle in an assignment $\mu$ in the CTTC algorithm, then the social welfare of the assignment $\mu_{TC}$ is greater than that of $\mu$.
\end{restatable}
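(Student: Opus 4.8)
The plan is to show that resolving a contractual top trading cycle $TC$ strictly increases the social welfare by accounting for the change in utility of every agent affected by the trade. First I would fix the cycle $TC = i_0, i_1, \dots, i_k$ and classify the agents whose utilities can change: these are exactly the $k+1$ cycle agents $i_0, \dots, i_k$ together with their roommates $\mu_N(i_0), \dots, \mu_N(i_k)$. Every agent outside these two groups keeps the same roommate and room, so their utility is unchanged and contributes nothing to $SW(\mu_{TC}) - SW(\mu)$. Thus it suffices to show the net change restricted to the cycle agents and their roommates is strictly positive.

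Next I would analyze the two groups separately using the two defining conditions of the CTTC arc. For each cycle agent $i_\ell$, the arc $(i_\ell, i_{\ell+1})$ exists because $i_\ell$ strictly improves by taking the room and roommate of $i_{\ell+1}$: by the \texttt{argmax} condition we have $U_{\mu(i_\ell \leftrightarrow i_{\ell+1})}(i_\ell) > U_\mu(i_\ell)$, i.e. $h_{i_\ell}(\mu_N(i_{\ell+1})) + v_{i_\ell}(\mu_R(i_{\ell+1})) > U_\mu(i_\ell)$. Since after the trade $i_\ell$ indeed occupies $i_{\ell+1}$'s old room and pairs with $i_{\ell+1}$'s old roommate, the cycle agents collectively gain a strictly positive amount. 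For the roommates, the contractual constraint $h_{\mu_N(i_{\ell+1})}(i_\ell) \ge h_{\mu_N(i_{\ell+1})}(i_{\ell+1})$ guarantees that when the roommate $\mu_N(i_{\ell+1})$ has $i_\ell$ move in (replacing $i_{\ell+1}$), its happiness does not decrease; moreover its room is unchanged, so each roommate's utility is weakly larger. Here I would need to be careful about the bookkeeping of which roommate inherits which new partner: agent $i_\ell$ moves into $i_{\ell+1}$'s triple, so the roommate $\mu_N(i_{\ell+1})$ is the one whose partner changes from $i_{\ell+1}$ to $i_\ell$, and its room $\mu_R(i_{\ell+1})$ stays fixed.

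Summing, the total change is $SW(\mu_{TC}) - SW(\mu) = \sum_{\ell} \big(U_{\mu_{TC}}(i_\ell) - U_\mu(i_\ell)\big) + \sum_{\ell} \big(U_{\mu_{TC}}(\mu_N(i_{\ell+1})) - U_\mu(\mu_N(i_{\ell+1}))\big)$, where the first sum is strictly positive (each cycle agent strictly improves) and the second sum is nonnegative (each affected roommate weakly improves by the contractual condition). Hence $SW(\mu_{TC}) > SW(\mu)$.

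The main obstacle I expect is the indexing and double-counting check: I must verify that the two groups (cycle agents and their roommates) are disjoint as contributors, or if they can overlap, that the overlap is handled consistently so each agent's utility difference is counted exactly once. In particular a roommate $\mu_N(i_{\ell+1})$ might itself be a cycle agent in degenerate configurations, and I would need to confirm the arc-construction rules preclude an agent from being simultaneously a trading participant and a consenting roommate on the same trade, or otherwise reconcile the two contributions. Establishing that these roles partition the affected agents cleanly — so the strict gains of the cycle agents are never cancelled by hidden losses — is the crux; once disjointness (or a correct accounting of overlap) is settled, the inequality follows immediately from the two arc conditions.
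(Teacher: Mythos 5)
Your proposal is correct and follows essentially the same route as the paper's proof: partition the agents into cycle members (who strictly gain by the $\argmax$/strict-improvement condition on arcs), their roommates (who weakly gain by the contractual condition $h_{\mu_N(j)}(i) \geq h_{\mu_N(j)}(j)$ with an unchanged room), and all remaining agents (unchanged), then sum. The disjointness issue you flag at the end is real but is also left implicit in the paper's argument, which silently assumes a consenting roommate is never itself a member of the cycle being resolved.
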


\Cref{lem:SW} immediately implies that CTTC is guaranteed to terminate as maximum social welfare is reached or no trading is possible. Its proof is presented in \Cref{app:CTTC}. 

Furthermore, we examine other properties of CTTC.

\paragraph{Pareto Optimality}
 TTC and its variants have been shown to be Pareto optimal for house allocations, kidney exchange, etc. However, \Cref{eg:ttcParetoEg} (in the appendix) show that CTTC is not Pareto optimal even under strict utilities. 
%
%
The instance in \Cref{eg:4PSnPO} can be used to show that CTTC is not PO for binary valuations. 

\paragraph{4-Person Stability} Although TTC produces an outcome that is in the core for house allocation \cite{shapley1974cores}, a variation of TTC for school choice \cite{10.1257/000282803322157061} is not stable. Thus, we next analyze the stability of CTTC algorithm. 
\Cref{eg:CTTCN4PS} illustrates that the output of CTTC is not 4PS. 



\begin{example}
\label{eg:CTTCN4PS} Consider an instance with six agents $a,b,c,d,e$, and $f$, and three rooms $r_1, r_2, r_3$. Let the initial assignment be $\{(a, b, r_1) (c, d, r_2) (e, f, r_3)\}$. 
All agents prefer agent $b$ and room $r_1$ the most. As shown in \Cref{fig:CTTCn4PS}, there are four arcs $(c, a), (d, a), (e, a)$ and $(f, a)$, but there is no arc from agent $a$. As a result, there is no cycle and no agent is involved in a trading.  However,  a 4PS blocking pair $(c, e)$ can exist where all 4 agents $c, d, e$ and $f$ get higher utilities after trading agent $c$ with $e$. A set of valuations depicting the above is given in \Cref{tab:TTCRn4PS} in the appendix. Thus, CTTC (\Cref{alg:CTTC-incomplete}) does not output a 4PS assignment.  
\end{example}

We summarize the properties in the next corollary.
\begin{corollary}
 CTTC terminates but it does not guarantee PO or 4PS.
\end{corollary}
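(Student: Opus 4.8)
The plan is to establish the corollary by assembling three independent facts, each already witnessed by a lemma or example in the excerpt, so the proof is essentially a bookkeeping argument rather than a fresh construction. First I would invoke \Cref{lem:SW}: whenever a trading cycle exists in the CTTC algorithm, resolving it strictly increases the social welfare of the assignment. Since the social welfare is a nonnegative quantity bounded above by the maximum achievable welfare over the finitely many assignments, and each iteration strictly increases it, the algorithm can perform only finitely many trades. Hence CTTC terminates. I would phrase this as the immediate consequence of \Cref{lem:SW} already flagged in the surrounding text.

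Next I would handle the two negative claims by pointing to the explicit counterexamples. For the failure of Pareto optimality, \Cref{eg:ttcParetoEg} (in the appendix) exhibits an instance with strict utilities on which the CTTC output is Pareto dominated; and the instance of \Cref{eg:4PSnPO} shows the same under binary valuations. For the failure of 4-person stability, \Cref{eg:CTTCN4PS} gives a six-agent, three-room instance in which CTTC halts with no trading cycle present, yet the pair $(c,e)$ is a 4PS blocking pair because all four agents $c,d,e,f$ strictly gain from swapping $c$ with $e$. Citing these examples directly discharges both non-guarantees.

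Finally I would remark on why the 4PS failure is not in tension with termination: the contractual constraint (requiring $h_{\mu_N(j)}(i) \geq h_{\mu_N(j)}(j)$) that guarantees welfare improvement, and thereby termination, is precisely what can block a mutually beneficial four-person swap from manifesting as a directed cycle in $G_\mu$. In \Cref{eg:CTTCN4PS}, agent $a$ has no outgoing arc (no swap improves $a$), so no cycle forms through $a$ despite the existence of a beneficial reconfiguration among the other agents.

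The main obstacle here is not mathematical depth but correct attribution: I must make sure each of the three sub-claims is matched to an example or lemma that genuinely certifies it under the stated valuation regime (strict versus binary), and that the termination argument correctly uses the strict monotonicity of social welfare together with finiteness of the assignment space. Since all the required witnesses are already in the excerpt, I expect the proof to reduce to a short paragraph invoking \Cref{lem:SW}, \Cref{eg:ttcParetoEg}, \Cref{eg:4PSnPO}, and \Cref{eg:CTTCN4PS}, with no genuinely hard step remaining.
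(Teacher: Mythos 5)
Your proposal is correct and matches the paper's own justification exactly: the paper states this corollary as a summary of \Cref{lem:SW} (termination via strictly increasing, bounded social welfare), \Cref{eg:ttcParetoEg} and \Cref{eg:4PSnPO} (failure of PO), and \Cref{eg:CTTCN4PS} (failure of 4PS). No further argument is needed.
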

\subsection{A Stable Modification of CTTC: CTTCR}
\Cref{eg:CTTCN4PS} shows that the agents without outgoing arc can obstruct forming a trading cycle, even when there is a pair of agents who would prefer to exchange, and in fact, it leads to a 4PS blocking pair. 

According to \Cref{lem:SW}, when a top trading cycle is resolved and agents are traded accordingly, the social welfare of the assignment increases. Based on this fact, we take different actions according to the change in social welfare. Specifically, we remove the agents with no outgoing arc when the social welfare level remains the same and add them back when the social welfare level changes. This algorithm is based on CTTC, and we call it Contractual Top Trading Cycle with Removal (CTTCR). The details of CTTCR are described below:
    \begin{enumerate}
        \item \label{ph:init} \textbf{Initialization}: this step is the same as Phase 1 in CTTC. In addition, we create an empty set $V'$ for storing the removed agents. 
        
        \item\label{ph:trading_cycle} \textbf{Cycle Detection and Trading}: we identify cycles in the directed graph $G_\mu$ as in CTTC. 
        If a cycle $TC$ exists, the assignment $\mu$ is updated by trading the agents according to the cycle $TC$. The social welfare increases and we go to phase \ref{ph:SWLchanges}.
        \begin{enumerate}[label=2.\arabic*]
            \item \label{ph:SWLchanges} \textbf{Agents Adding when SW Changes}: 
            reconstruct the graph $G_\mu$ as defined in CTTC. 
        \end{enumerate}
        \item \label{ph:SWLNchanges}\textbf{Agents Removing when SW Remains the Same}: 
        for each agent $i$ with no outgoing arc in $G_{\mu}$, we add it 
        to $V'$ and remove it from the graph by updating vertex set $V(G)$ by setting $V(G) = N \setminus V'$. 
        Finally, we construct the graph $G_{\mu}$ on $V(G)$ using the rules in the initialization step of CTTC. If all agents are removed (i.e. $V'=N$), then the algorithm returns the assignment $\mu$, otherwise, it goes to Phase \ref{ph:trading_cycle}.
    \end{enumerate}

The algorithmic description of CTTCR is given in \Cref{alg:CTTCR}.

\begin{algorithm}[t]
\SetAlgoNoLine
\caption{Contractual Top Trading Cycle with Removal}
\label{alg:CTTCR}
\KwIn{an instance $\langle N,R,H,V\rangle$ and an assignment $\mu$} 
\KwOut{a 4PS assignment} 
\BlankLine
\textbf{Phase 1. Initialization}\\
Construct $G_{\mu}$ on vertex set $V(G)=N$ as in \Cref{alg:CTTC-incomplete}\; 
Set $V'\gets \emptyset$\;

\While{$V' \neq N$}{
    \textbf{Phase 2. Cycle Detection and Trading}\;
    \While{ there exists a cycle in $G_\mu$}{
        Let $TC = i_0, i_1, \dots, i_{k}$ be a cycle in $G_\mu$\;
        Update $\mu$ by trading according to the cycle $TC$ \;
        \textbf{Phase 2.1. Agents adding} \\
        Construct the $G_{\mu}$ on the vertex set $N$ as in  \Cref{alg:CTTC-incomplete}\;
    }
    \textbf{Phase 3. Agent Removing:}\\
    $V'\gets$ the set of vertices in $G_\mu$ with no outgoing arc\;
    Update the vertex set $V(G)$ of $G_{\mu}$:  $V(G) \gets N\setminus V'$\;
    Update the arcs of $G_\mu$: $(i, j) \in E$ if and only if $i \neq j$
    and $j \in \argmax_{s \in V(G) } U_{\mu(i \leftrightarrow s)}(i)$, $U_{\mu(i \leftrightarrow j)}(i) > U_{\mu}(i)$  and  $h_{\mu_N(j)}(i) \geq h_{\mu_N(j)}(j)$ for $\{i,j\} \subseteq V(G)$\;
}
\KwRet{Assignment $\mu$}.
\end{algorithm}





\noindent
\textbf{Termination and Running time}. Observe that when an assignment is updated according to a trading cycle, the social welfare increases even when $G_{\mu}$ is defined on a subset of agents. Thus, we can follow the arguments in \Cref{lem:SW} to show that the algorithm terminates.
It takes $O(n^2)$ time to construct the graph $G_\mu$.
To compute the running time of the while loop, first note that each agent has $2n-1$ possible roommates and $n$ possible room assignments. Thus, the number of distinct utilities of an agent is $O(n^2)$. Since the utility of at least one agent is strictly improved in each iteration of the while loop in Phase \ref{ph:trading_cycle}, the loop can run at most $O(n^3)$ times. Observe that an agent's utility never decreases in either Phase 2 or in Phase 3. Therefore, the outer while loop runs for at most $O(n^3)$ times. Additionally, each step in the while loop takes $O(n^2)$ time since the most expensive step is to update the graph $G_\mu$. Therefore, the final running time of the algorithm is $O(n^5)$.

\paragraph{Pareto Optimality}
As CTTCR is built on top of CTTC, the same counterexample given previously can also prove that CTTCR is not PO. With more restrictions than CTTC, CTTCR also does not create any arc in the initial assignment $\mu$ in \Cref{eg:ttcParetoEg}. However, assignment $\mu'$ in \Cref{eg:ttcParetoEg} benefits all agents.

\paragraph{4-Person Stability} Although CTTCR does not guarantee PO, it satisfies 4PS. Towards this, we begin with some observations.
\begin{observation}
    \label{obs:Rule1} 
    The following properties hold for \Cref{alg:CTTCR}.
    \begin{enumerate}[label=(\alph*)]
        \item     When \Cref{alg:CTTCR} terminates, there is no trading cycle in the graph.
    \item In Phase \ref{ph:SWLNchanges}, all the agents with no outgoing arc are deleted from the graph.
    \item\label{it:noarc}  \Cref{alg:CTTCR} terminates when there is no agent with an arc.
    \end{enumerate}
\end{observation}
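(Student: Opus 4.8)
The plan is to read all three parts directly off the control flow of \Cref{alg:CTTCR}; none requires new combinatorial machinery, so the work lies in pinning down the invariants governing the set $V'$ and the outer loop guard. The fact I would isolate first is the role of that guard: the loop runs while $V'\neq N$, and Phase~\ref{ph:SWLNchanges} places into $V'$ exactly the agents of the current graph $G_\mu$ that have no outgoing arc and then deletes them by setting $V(G)\gets N\setminus V'$. I would record as an invariant that $V'$ only ever contains agents that were, at some execution of Phase~\ref{ph:SWLNchanges}, sinks of $G_\mu$, and that the algorithm returns $\mu$ precisely when $V'=N$, i.e. when every agent has been removed as such a sink.

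Part (b) is then immediate from the definition of Phase~\ref{ph:SWLNchanges}: that phase collects \emph{all} agents with no outgoing arc into $V'$ and removes them from $V(G)$ in a single step, so no sink of the current graph survives the phase. Part \ref{it:noarc} I would obtain as the ``sufficiency'' half of the guard analysis: if at some point $G_\mu$ has no agent with an outgoing arc, then every remaining vertex is a sink, so Phase~\ref{ph:SWLNchanges} absorbs all of them into $V'$, forcing $V'=N$; the guard $V'\neq N$ then fails and the algorithm returns. In particular the inner loop of Phase~\ref{ph:trading_cycle} does nothing in this situation, since a directed cycle needs every vertex on it to have an outgoing arc.

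For part (a) I would use the converse half together with the exit condition of Phase~\ref{ph:trading_cycle}. At termination the guard has failed, so $V'=N$ and the final vertex set $V(G)=N\setminus V'$ is empty; the graph at the point of return therefore contains no arc and a fortiori no trading cycle. As a cross-check I would also note that the inner loop of Phase~\ref{ph:trading_cycle} is exited only once $G_\mu$ is acyclic, so a cycle cannot persist into Phase~\ref{ph:SWLNchanges} in any case. The one point I expect to require genuine care is the bookkeeping of $V'$: the prose (``we add it to $V'$'') reads as an accumulating set, while the pseudocode reassigns $V'$ in each pass, and the statements implicitly evaluate ``no outgoing arc'' on the current restricted vertex set rather than on all of $N$. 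I would therefore make explicit which convention I adopt---accumulation, matching the prose---and verify that under it the identity ``$V'=N$ iff every agent has been removed as a sink'' holds, since this is exactly the link that ties the three parts together.
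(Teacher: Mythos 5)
Your proof is correct, and at bottom it is the same kind of argument as the paper's: all three parts are read off the control flow of \Cref{alg:CTTCR}. The differences are in how the implications are distributed. The paper reads part~(c) as ``termination implies no agent has an arc'' and proves it by noting that after Phase~3 the graph is sink-free, so any nonempty graph with an arc must contain a directed cycle and the algorithm would keep trading; you read (c) as the converse (``no agent has an arc implies termination'') and derive it from the loop guard, while the direction the paper actually needs --- and uses in the proof of \Cref{the:TTCRSW_4PS} --- ends up inside your part~(a), via ``termination forces $V'=N$, hence an empty vertex set, hence no arcs and no cycles.'' So nothing is missing; your invariant that $V'$ accumulates exactly those agents that were sinks at the moment of removal, together with ``return iff $V'=N$,'' is precisely the fact the later theorem leans on (a blocking pair always has outgoing arcs, so its members can never enter $V'$). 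Your guard-based route is arguably cleaner than the paper's sink-free-digraph argument, and your flagging of the $V'$ bookkeeping (accumulation in the prose versus reassignment in the pseudocode, and the fact that recomputing arcs on a restricted vertex set can create new sinks) identifies a genuine underspecification that the paper's two-sentence proof silently glosses over; adopting the accumulating convention, as you do, is what makes all three statements and the running-time analysis go through.
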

\begin{proof}
    The first two statements follow directly from the algorithm. For the third statement, note that the graph has no sink vertex. Thus, if there are agents that have an incident arc in the graph, then there must be a cycle in the graph.
    Therefore, when \Cref{alg:CTTCR} terminates, there is no agent with an arc.
\end{proof}

With \Cref{obs:Rule1}, we prove that CTTCR (\Cref{alg:CTTCR}) gives a 4PS solution.
\begin{theorem}
\label{the:TTCRSW_4PS}
    CTTCR (\Cref{alg:CTTCR}) outputs a $4$-person stable assignment in time $O(n^5)$.
\end{theorem}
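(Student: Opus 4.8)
The running-time bound of $O(n^5)$ has already been argued in the paragraph on termination and running time, so the plan is to devote the proof to the $4$-person stability of the output and to argue it by contradiction. First I would suppose the returned assignment $\mu$ admits a 4PS blocking pair $(i,j)$, with triples $(i,i',r_1)$ and $(j,j',r_2)$ in $\mu$, and spell out the four strict inequalities in the definition of 4PS. The useful way to read them is that after swapping $i$ and $j$ all four agents $i,i',j,j'$ strictly improve; in particular $U_{\mu(i\leftrightarrow j)}(i) > U_\mu(i)$ together with $h_{j'}(i) > h_{j'}(j)$ says that moving $i$ into $j$'s slot strictly helps $i$ while $j$'s roommate $j'$ strictly consents, and symmetrically for $j$. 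Thus a blocking pair is exactly a pair of mutually improving swaps, each of which passes the contractual consent test used to draw arcs in CTTCR (and the same holds for the pair $i',j'$).

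The backbone of the argument is \Cref{obs:Rule1}: by \Cref{lem:SW} every resolved cycle strictly increases social welfare, so the algorithm ends in a phase in which $\mu$ is fixed and vertices are deleted as sinks on a shrinking vertex set, and part (c) guarantees that at termination no agent carries an outgoing arc. I would therefore aim to show that the existence of the blocking pair forces some agent to retain an outgoing arc (equivalently, forces a cycle to appear in one of the peeled subgraphs $G_\mu|_{S_t}$), contradicting \Cref{obs:Rule1}. Writing the peeling as a chain $N = S_0 \supsetneq S_1 \supsetneq \cdots \supsetneq S_\ell=\emptyset$, where $S_{t+1}$ deletes the sinks of $G_\mu|_{S_t}$ and no $G_\mu|_{S_t}$ contains a cycle, I would track the four blocking agents through this chain and examine the last round at which all four coexist.

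The crux, and the step I expect to be the main obstacle, is the mismatch between the blocking condition and the arc rule. A blocking pair only certifies that $i$ improves and consents with its \emph{particular} partner $j$, whereas an out-arc of $i$ is drawn solely to $i$'s \emph{best available} swap partner and exists only if that best partner consents. Hence $i$ can sit as a legitimate sink whenever some strictly more attractive alternative is present but contractually blocked, so that the blocking arc $i\to j$ is ``masked.'' Overcoming this is exactly where the progressive removal must be used: the masking alternatives are themselves unprotected, and the plan would be to argue that, while $j$ survives, every alternative strictly preferred to $j$ by $i$ is deleted first, so that the recomputed best swap partner of $i$ eventually becomes $j$ and the consent test is then met, reviving the arc $i\to j$; doing the same for $j\to i$ (and for the symmetric pair $i'\leftrightarrow j'$) would produce opposing arcs and hence a cycle before any of the four agents can be removed. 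Making ``the masking alternatives are exhausted before the pair is broken'' precise — and in particular ruling out that $i$ and its blocking masker are deleted in the same batch — is the delicate part of the proof and, I expect, the principal difficulty; it is also the place where the exact tie-breaking and $\argmax$ conventions in the arc definition must be pinned down. Once this is settled, the contradiction with \Cref{obs:Rule1} is immediate and, together with the established running time, yields the theorem.
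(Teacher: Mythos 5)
Your overall route is the same as the paper's: assume a 4PS blocking pair $(i,j)$ in the output $\mu$, argue that $i$ (and $j$) must then carry an outgoing arc in the final graph, and contradict \Cref{obs:Rule1}\ref{it:noarc}. The paper's proof is essentially just that one step: it asserts that $i$ has an arc to $j$ --- since $U_{\mu(i\leftrightarrow j)}(i)>U_\mu(i)$ and $j$'s roommate $j'$ consents because $h_{j'}(i)>h_{j'}(j)$ --- or else to some $k$ with $U_{\mu(i\leftrightarrow k)}(i)>U_{\mu(i\leftrightarrow j)}(i)$ and $h_{\mu_N(k)}(i)\geq h_{\mu_N(k)}(k)$, and likewise for $j$; the contradiction is then immediate. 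So you have correctly located the load-bearing step of the argument.

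However, your proposal does not close that step, and you say so yourself: the ``masking'' case --- where the unconstrained $\argmax$ of $U_{\mu(i\leftrightarrow s)}(i)$ is some $k$ whose roommate refuses consent, so that $i$ receives no outgoing arc at all despite having a consented, improving swap with $j$ --- is left as ``the delicate part.'' That is a genuine gap, not a routine detail, and it is exactly the case the paper's one-line claim does not cover. Moreover, the repair you sketch (peel the maskers away first, so that $i$'s recomputed best partner eventually becomes $j$) does not go through under the algorithm as written: Phase 3 removes \emph{all} sinks of the current graph in one batch, so a masked $i$ is deleted in the very round in which it first becomes a sink, regardless of whether its masker $k$ survives; once $i\in V'$ it can only re-enter after some cycle is resolved, and if no cycle ever forms the blocking pair $(i,j)$ persists into the output. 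You flag precisely this (``ruling out that $i$ and its blocking masker are deleted in the same batch'') but offer no argument for it, and none follows from \Cref{lem:SW} or \Cref{obs:Rule1} alone. To complete the proof you would have to either show that a blocking agent's $\argmax$ partner can always be taken to be consenting (which is what the paper's proof implicitly assumes), or reinterpret the arc rule so that the $\argmax$ ranges only over consenting partners; absent one of these, your argument is incomplete at exactly the point you identify.
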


\begin{sloppypar}
\begin{proof}
    
    We prove this by contradiction. Let $\mu$ be an assignment output by \Cref{alg:CTTCR}. Assume that there is a 4-person stable blocking pair $(i, j)$, and the roommates of agent $i$ and $j$ in $\mu$ are $i'$ and $j'$, respectively.
        
        We observe that agent $i$ and $j$ are in the vertex set $V(G)$ in Phase \ref{ph:SWLNchanges} of \Cref{alg:CTTCR}, as according to the definition of 4PS blocking pair we have $U_{\mu(i \leftrightarrow j)}(i) > U_{\mu}(i)$, $U_{\mu(i \leftrightarrow j)}(j) > U_{\mu}(j)$, $h_{i'}(j) > h_{i'}(i)$ and $h_{j'}(i) > h_{j'}(j)$. Thus, agent $i$ has an arc to agent $j$ or to another agent $k$ if agent $i$ gets a better utility by exchanging with agent $k$ than agent $j$ (i.e. $U_{\mu(i \leftrightarrow k)}(i) > U_{\mu(i \leftrightarrow j)}(i)$ and $h_{\mu_N(k)}(i) > h_{\mu_N(k)}(k)$). Similarly, agent $j$ also has an outgoing arc.
        Thus, a 4PS blocking pair contradicts the fact in \Cref{obs:Rule1}\ref{it:noarc}.
\end{proof}\end{sloppypar}

\paragraph{2-Person Stability} While CTTCR achieves 4PS, we give \Cref{eg:CTTCRn2PS} (in the appendix) with values only from $\{0,1,2\}$ that shows that CTTCR is not 2PS. In fact, all $2n^2-2n$ pairs are 2PS blocking.

\paragraph{Strategy-proofness}
The original TTC is strategy-proof when applied in one-to-one assignment, where the trading cycles are disjoint because an agent can only have one outgoing arc. However, in our model, a triple consists of two agents, and a pair of agents can be involved in more than one cycle even when the utilities are strict. As a result, we need to decide which cycle to resolve first in \Cref{alg:CTTCR}. In this section, we show that agents have an incentive to misrepresent their preferences to manipulate the order of resolving cycles. Moreover, we give an example to prove the impossibility of designing a cycle selection rule that gives strategy-proofness.

\begin{remark}
    Let $\mathcal{F}$ be a rule to select a cycle for trading in Phase \ref{ph:trading_cycle} of \Cref{alg:CTTCR}. We observe from \Cref{eg:CTTCnSP} that there exists no rule $\mathcal{F}$ such that \Cref{alg:CTTCR} is strategy-proof.
\end{remark}

\begin{example}
\label{eg:CTTCnSP}
    Consider an instance with ten agents $a_1$ to $a_{10}$, and five rooms $r_1$ to $r_5$. Agents' valuations are given in  \Cref{tab:TTCRSD_nSP-agent,tab:TTCRSD_nSP-room}. As shown in  \Cref{fig:TTCSD_nSP1}, the initial assignment is $\{(a_1, a_2, r_1)$, $(a_3, a_4, r_2)$, $(a_5, a_6, r_3)$, $(a_7, a_8, r_4)$, $(a_9, a_{10}, r_5)\}$. There are two top trading cycles, and the algorithm needs to select one of them to resolve first. Agent $a_1$ (or $a_2$) can prevent forming the cycle $c_1$ (resp. $c_2$) by misreporting its preferences over agent $a_3$ (resp. $a_8$). We can check that when one of the cycles is selected and agents are traded accordingly, then the other cycle no longer exists after the trade. \Cref{fig:TTCSD_nSP3} shows that agent $a_1$ gets a higher utility ($h_{a_1}(a_7)+v_{a_1}(r_5)=22$) than what it gets by truly reporting its values ($h_{a_1}(a_3)+v_{a_1}(r_2)=20$), and  \Cref{fig:TTCSD_nSP2} shows that agent $a_2$ gets a higher utility ($h_{a_2}(a_4)+v_{a_2}(r_3)=22$) than what it gets by truly reporting its values ($h_{a_2}(a_8)+v_{a_2}(r_4)=20$). As a result, no matter which cycle is selected, there exists an agent who has an incentive to lie. Thus, CTTCR is not strategy-proof with any cycle selection rule.
\end{example}


\begin{figure}[t]
    \centering
    \begin{subfigure}{0.31\textwidth}
        \centering
        \frame{\includegraphics[width=\textwidth]{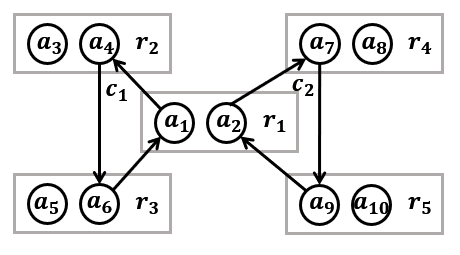}}
        \caption{The Initial Assignment}
        \label{fig:TTCSD_nSP1}
        \vspace{5mm}
    \end{subfigure}
    \hfill
    \begin{subfigure}{0.31\textwidth}
        \centering
        \frame{\includegraphics[width=\textwidth]{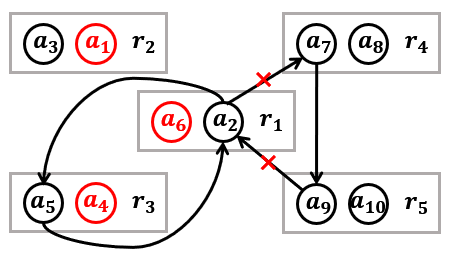}}
        \caption{Cycle $c_1$ is selected.}
        \label{fig:TTCSD_nSP2}
        \vspace{5mm}
    \end{subfigure}
    \hfill
    \begin{subfigure}{0.31\textwidth}
        \centering
        \frame{\includegraphics[width=\textwidth]{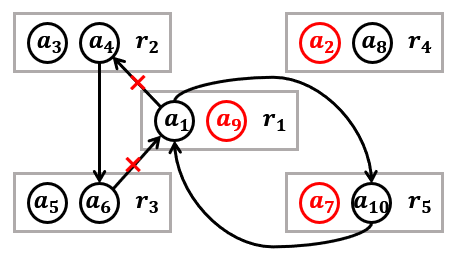}}
        \caption{Cycle $c_2$ is selected.}
        \label{fig:TTCSD_nSP3}
        \vspace{5mm}
    \end{subfigure}
    \caption{\Cref{eg:CTTCnSP} showing CTTCR is not strategy-proof}
    \label{fig:TTCSD_nSP}
        \vspace{5mm}
\end{figure}

\begin{table}[t]
\centering
 \begin{minipage}{0.48\linewidth}
\centering
\begin{tabular}{|m{3mm}|m{3mm}|m{3mm}|m{3mm}|m{3mm}|m{3mm}|m{3mm}|m{3mm}|m{3mm}|m{3mm}|m{3mm}|}
\hline
    &$a_1$&$a_2$&$a_3$&$a_4$&$a_5$&$a_6$&$a_7$&$a_8$&$a_9$&$a_{10}$\\
     \hline
    $a_1$ &--{}{} & 1 & 13 &1&1&1& 16 &1&12&1\\
    \hline
    $a_2$ &1 &--{}{}&2&16&3&12&4&13&5&6\\
   \hline
    $a_3$ &10&1&--{}{}&9&1& 1 & 1 &1& 1&1\\
    \hline
   $a_4$ &1&14&1&--{}{}&13&1&1&1&1&1\\
   \hline
   $a_5$ &2&1&1&10&--{}{} &9&1&1&1&1\\
   \hline
   $a_6$ &1& 13& 1&1& 14 &--{}{}&1& 1 & 1 &1\\
   \hline
   $a_7$ &14&1&1&1&1&1&--{}{}&1&1&13\\
   \hline
   $a_8$ &1&10&1&1&1&1&9&--{}{}&1&1\\
   \hline
   $a_9$ &13&1&1&1&1&1&1&1&--{}{}&14\\
   \hline
   $a_{10}$ &1&1&1&1&1&1&10&1&9&--{}{}\\
   \hline
\end{tabular}
\caption{Agent values in \Cref{eg:CTTCnSP}} 
\label{tab:TTCRSD_nSP}\label{tab:TTCRSD_nSP-agent}
 \end{minipage}
 \hfill
 \begin{minipage}{0.48\linewidth}
 \centering
\begin{tabular}{|m{3mm}|m{3mm}|m{3mm}|m{3mm}|m{3mm}|m{3mm}|}
\hline
    &$r_1$&$r_2$&$r_3$&$r_4$&$r_5$\\
     \hline
    $a_1$ &9&7&2&1&6\\
    \hline
    $a_2$ &9&2&6&7&1\\
   \hline
    $a_3$ &1&7&1&1&1 \\
    \hline
   $a_4$ &1&1&7&1&1\\
   \hline
   $a_5$ &9&1&7&1&1\\
   \hline
   $a_6$ &7&1&1&1&1\\
   \hline
   $a_7$ &1&1&1&1&7\\
   \hline
   $a_8$ &1&1&1&7&1\\
   \hline
   $a_9$ &7&1&1&1&1\\
   \hline
   $a_{10}$ &9&1&1&1&7\\
   \hline
\end{tabular}
\caption{Room values in \Cref{eg:CTTCnSP}} 
\label{tab:TTCRSD_nSP-room}    
 \end{minipage}
\end{table}

\vspace{-3mm}
\section{Binary \& Symmetric Valuations}
\label{sec:sym_bin}

In this section, we finally design an algorithm that produces a 2PS assignment. We will show that, in contrast to general preferences, a 2PS assignment can be found in polynomial time by a \textit{swapping} algorithm when the preferences are binary and symmetric. The definition of the swap operation is the same as the one given in \Cref{sec:model}.

A challenge of the algorithm we design is to determine the pair to be swapped first among the 2PS blocking pairs, as an agent can be involved in more than one blocking pair. In \Cref{alg:Swapping}, we denote a \emph{selection rule} as $\mathcal{F}$ that selects the pair of agents to be swapped. For instance, it can arbitrarily select one blocking pair to swap, or assign the agents with a certain order and select the lexicographically smallest pair to swap.
We examine the strategy-proofness of the algorithm when applying different selection rules.

We first prove the following result for binary and symmetric valuations. Its complete proof and \Cref{ex:non-binswap} showing that it does not hold under non-binary valuations are presented in \Cref{app:symmbin}.


\begin{restatable}{theorem}{theSBSW}
\label{the:SB_SW}
    If valuations are symmetric and binary, then swapping two agents who formed a 2PS blocking pair increases the social welfare of the assignment.
\end{restatable}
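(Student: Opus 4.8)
The plan is to set up notation for the two triples involved in the swap and then compute the change in social welfare as a sum of contributions from the four agents, exploiting symmetry to pair up the roommate-valuation terms. Let the blocking pair be $(i,j)$ with triples $(i,i',k)$ and $(j,j',s)$ in $\mu$, so that after swapping we obtain the triples $(j,i',k)$ and $(i,j',s)$. The social welfare difference $SW(\mu_{i\leftrightarrow j}) - SW(\mu)$ involves only the terms from the four agents $i,i',j,j'$, since everyone else's room and roommate are untouched. First I would write this difference explicitly, separating the room-valuation terms from the roommate-valuation terms.

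For the room terms, agent $i$ moves from room $k$ to room $s$ and agent $j$ moves from $s$ to $k$, while $i'$ stays in $k$ and $j'$ stays in $s$; so the net room contribution is $v_i(s) - v_i(k) + v_j(k) - v_j(s)$, which is exactly the sum of the two room-valuation changes for the swapping agents. For the roommate terms I would use symmetry: the pairs $\{i',i\}$ and $\{j',j\}$ are replaced by $\{i',j\}$ and $\{j',i\}$, and by $h_x(y)=h_y(x)$ each unordered pair contributes its happiness value twice to the welfare. Thus the roommate part of the welfare change equals $2\bigl(h_{i'}(j) - h_{i'}(i)\bigr) + 2\bigl(h_{j'}(i) - h_{j'}(j)\bigr)$, where I have grouped the four terms $h_{i'}(j),h_j(i'),h_{j'}(i),h_i(j')$ against $h_{i'}(i),h_i(i'),h_{j'}(j),h_j(j')$ using $h_{i'}(j)=h_j(i')$ etc.

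The crux is then to argue that the full difference is strictly positive. By the definition of a 2PS blocking pair both swapping agents strictly improve, i.e. $h_i(j')+v_i(s) > h_i(i')+v_i(k)$ and $h_j(i')+v_j(k) > h_j(j')+v_j(s)$; adding these gives strict positivity of the \emph{own-utility} increment of $i$ and $j$ together. The remaining contribution comes from the roommates $i'$ and $j'$, and here I would invoke binariness: with values in $\{0,1\}$ the consent-type inequalities $h_{i'}(j)\ge h_{i'}(i)$ need not hold a priori (a 2PS blocking pair does not require roommate consent), so the roommate terms could a priori be negative. The main obstacle is precisely controlling these roommate terms, and the resolution is that under binary symmetric valuations each roommate term $h_{i'}(j)-h_{i'}(i)$ lies in $\{-1,0,1\}$ and, combined with the strict integer gain already guaranteed by the 2PS condition, a careful case analysis (or a rounding argument using integrality of all quantities) shows the total cannot drop below the strict improvement of $i$ and $j$.

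I would therefore finish by combining the two parts: the welfare change equals $\bigl(U_{\mu_{i\leftrightarrow j}}(i)-U_\mu(i)\bigr) + \bigl(U_{\mu_{i\leftrightarrow j}}(j)-U_\mu(j)\bigr) + \bigl(h_{i'}(j)-h_{i'}(i)\bigr) + \bigl(h_{j'}(i)-h_{j'}(j)\bigr)$, where symmetry has let me fold the duplicated roommate terms back into single copies attached to the four agents' utilities. The first two parenthesized quantities are each at least $1$ by the strict 2PS inequalities and integrality, while the last two are each at least $-1$; the hard step is ruling out the worst case where both roommate terms equal $-1$, which I expect to resolve by showing that $h_{i'}(j)=0<1=h_{i'}(i)$ forces $i'$ and $i$ to be a mutually-liked binary pair whose separation is already accounted for, so that the strict gains of $i$ and $j$ strictly dominate. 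I anticipate this case-check over the finitely many binary patterns to be the only delicate part; everything else is bookkeeping enabled by symmetry.
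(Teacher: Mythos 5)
Your proposal is correct and follows essentially the same route as the paper's proof: both decompose $SW'-SW$ into the two swapping agents' strict utility gains (each $\geq 1$ by integrality of binary values) plus two roommate correction terms, and both use symmetry to fold the duplicated happiness terms so that each roommate term lies in $\{-1,0,1\}$. The one point where you go beyond the paper is in insisting on strict positivity — the paper's own proof stops at $SW'-SW\geq 0$, which does not literally match the word ``increases'' in the statement — and the case-check you defer does close exactly as you anticipate: if $h_{i'}(j)-h_{i'}(i)=-1$ then symmetry gives $h_i(i')=1$, so agent $i$'s strict gain forces $h_i(j')=1$ and $v_i(s)-v_i(k)=1$, whence $h_{j'}(i)=h_i(j')=1$ and the other roommate term is nonnegative, ruling out the $(-1,-1)$ worst case.
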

\begin{proof}[Proof Sketch]
Suppose $\mu$ is an arbitrary assignment, and triples $t_1=(a, b, r_1)$ and $t_2=(c, d, r_2)$ are in the assignment $\mu$, where $a,b,c,d \in N$ and $r_1,r_2 \in R$. We assume that there exists a 2PS blocking pair $(a, c)$. 
 We aim to show that the assignment $\mu'$, created from $\mu$ 
 by resolving the blocking pair $(a,c)$ by \emph{swapping agent $a$ and $c$}, has a greater social welfare than $\mu$.  
 Observe that the new triples in $\mu'$ are $t_1' = (b, c, r_1)$ and $t_2' = (a, d, r_2)$.  Let $SW$ and $SW'$ denote the social welfare of $\mu$ and $\mu'$, respectively.

Next, we will define the utility of agents $a$ and $c$ and compare the social welfare of $\mu$ and $\mu'$.
The existence of a 2PS blocking pair $(a,c)$ implies 
$h_a(d) + v_a(r_2) > h_a(b) + v_a(r_1)$ from utility of agent $a$ and $h_c(b) + v_c(r_1) > h_c(d) + v_c(r_2)$ from utility of agent $c$.
%
We represent their difference by $X$. Thus, $X = (h_a(d) + v_a(r_2) + h_c(b) + v_c(r_1)) - (h_a(b) + v_a(r_1) + h_c(d) + v_c(r_2)) > 0$.

Recall that social welfare is the sum of the utilities of all the agents. 
Subtracting the social welfare of $\mu$ from that of $\mu'$, we can show
$SW' - SW = (h_c(b) - h_c(d) + h_a(d) - h_a(b)) + X$.
We show that in the binary case, $X \geq 2$, so $SW' - SW \geq 0$.  Thus, we prove the theorem.
\end{proof}

\paragraph{Swapping Algorithm for 2PS}
Starting from an arbitrary assignment, we swap a 2PS blocking pair that is selected using a selection rule $\mathcal{F}$, until we find a 2PS assignment. Theorem \ref{the:SB_SW} indicates that the process of swapping agents who form a 2PS blocking pair terminates in time $O(n)$ as the maximum social welfare for a given binary instance $\langle N, R, H, V\rangle$ with $n$ rooms is at most $4n$. 


 \begin{remark}
    The hardness reduction from \citet{chan2016assignment} showed that finding a solution that is 2PS is NP-hard when the preferences are non-symmetric but the agents are indifferent between the rooms. Our \Cref{alg:Swapping} shows that the hardness is mainly due to the asymmetry in the preferences as, otherwise, the problem can be solved in polynomial time using  \Cref{the:SB_SW}.
\end{remark}

A detailed description is given in \Cref{alg:Swapping} in the appendix. However, we provide a simple counterexample in the appendix to show that it is not PO.
When valuations are symmetric, agents can only lie about their preference for rooms. We further consider the strategy-proofness of the algorithm, and found that the order of the swapping matters and the \textit{Selection Rule} $\mathcal{F}$ will affect the strategy-proofness as shown in \Cref{ex:eg_SB_n_SP_2}. 
We get the following corollary from the above example.
\vspace{-2mm}
\begin{corollary}
    When we use serial dictatorship as the \textit{Selection Rule $\mathcal{F}$}, the Swapping algorithm is not strategy-proof.
 \end{corollary}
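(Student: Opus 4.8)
The plan is to prove the corollary by exhibiting a single counterexample, which is exactly the role of \Cref{ex:eg_SB_n_SP_2}. Because the valuations are symmetric, the roommate-compatibility values are shared between each pair and cannot be altered unilaterally; hence the only admissible deviation for an agent is a misreport of its room valuations $v_i$. I would therefore build a binary symmetric instance, fix a serial-dictatorship priority order as the selection rule $\mathcal{F}$ (so that, among all current 2PS blocking pairs, the lexicographically smallest pair under this order is the one swapped), and then compare two executions of the Swapping algorithm: one on the truthful profile and one in which a single target agent reports a different binary room value.

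The mechanism to engineer is the following. The Swapping algorithm always terminates at \emph{some} 2PS assignment by \Cref{the:SB_SW}, but \emph{which} 2PS assignment it reaches depends on the order in which blocking pairs are resolved, and under $\mathcal{F}=$ serial dictatorship that order is governed by the fixed agent priority. By flipping one room value, the target agent can change the set of 2PS blocking pairs it participates in --- either suppressing a pair (by making the corresponding swap no longer improving from its reported perspective) or creating a new one --- and thereby change which blocking pair is lexicographically smallest and fires first. I would choose the instance so that the truthful run resolves blocking pairs in an order that strands the target agent in a room/roommate of lower true utility, whereas the deviating run fires a differently ordered swap first and converges to a distinct 2PS assignment in which the target agent's \emph{true} utility is strictly higher.

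Concretely, I would start from a small instance (four agents and two rooms is the natural first attempt, possibly enlarged if the priority interactions prove too rigid), with $h_i(j)\in\{0,1\}$ respecting $h_i(j)=h_j(i)$ and $v_i(r)\in\{0,1\}$. I would pick the SD order and the initial assignment so that, under truthful reporting, the first blocking pair selected is precisely the one that moves the target agent away from its best feasible room; then I would verify that toggling one of that agent's room values reorders (or removes) that blocking pair, diverting the swap sequence toward a 2PS assignment better for the agent. A final check confirms both profiles remain symmetric and binary, so the deviation is legitimate.

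The hard part is the tight coupling between the 2PS blocking condition --- which demands that \emph{both} agents in the pair strictly improve --- and the binary-plus-symmetric restriction, under which each utility is a sum of two $\{0,1\}$ terms and so lies in $\{0,1,2\}$. The single room-value flip must reverse the strict inequality on the target agent's side without destroying the counterpart's strict gain, and it must do so in a manner that genuinely reorders the serial-dictatorship resolution of blocking pairs rather than merely relabeling an equivalent outcome. Verifying that the deviating execution terminates at a strictly more favorable 2PS assignment --- and does not simply cycle back to the truthful one --- is the delicate step that \Cref{ex:eg_SB_n_SP_2} settles by explicit case analysis.
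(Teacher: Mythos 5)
Your overall approach matches the paper's: the corollary is proved by a single counterexample (the paper's \Cref{ex:eg_SB_n_SP_2}), and you correctly isolate the two key ingredients --- under symmetric valuations an agent can only misreport its room values, and the manipulation works by changing which 2PS blocking pair fires first under the serial-dictatorship selection rule, thereby steering the swap sequence to a different terminal 2PS assignment that is better in \emph{true} utility. However, as written the proposal is a roadmap rather than a proof: the entire substance of the argument is the explicit instance, and you never exhibit one. Saying that the delicate verification ``is the delicate step that the example settles by explicit case analysis'' defers exactly the part that constitutes the proof.

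There is also a concrete reason your proposed starting point of four agents and two rooms cannot work, beyond the interactions merely being ``too rigid.'' With two rooms and initial assignment $\{(a,b,r_1),(c,d,r_2)\}$, every blocking pair involving agent $a$ (namely $(a,c)$ and $(a,d)$) moves $a$ to the \emph{same} room $r_2$, so $a$'s side of the blocking condition for the two pairs differs only in the roommate term $h_a(c)$ versus $h_a(d)$, which symmetry forbids $a$ from misreporting. A flip of $v_a(r_2)$ therefore suppresses or preserves both candidate pairs simultaneously and cannot reorder them; the same holds for every agent. To let a room-value lie selectively kill one blocking pair while keeping another alive, the deviating agent must sit in blocking pairs pointing to at least two distinct rooms, which forces at least three rooms (six agents). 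This is precisely the scale of the paper's example, where agent $d$ lies about room $k$ to block the pair sending it to $r$oom $k$ while the pair sending it elsewhere survives, and agent $b$ plays an analogous role for room $j$.
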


\section{Conclusion}
In this paper, we explore the problem of stability in roommate matching with externalities. We consider both 2-person stability and 4-person stability while also analyzing efficiency and strategy-proofness. 
We modify two classical algorithms, serial dictatorship and top trading cycle to fit our model. Our result demonstrates that while the SD-based algorithm ensures strategy-proofness, PO, and 4PS, it fails to be 2PS. Meanwhile, 
although our TTC-based algorithm achieves 4PS, it is not PO, and we show its impossibility of being strategy-proof, which underscores a key limitation of TTC in simultaneously guaranteeing stability and incentive compatibility in our problem model. 

Our findings highlight the inherent trade-offs between stability, efficiency, and strategic behavior in roommate matching problem, suggesting a number of future directions. For example, an important direction is to explore the weaker notions of strategy-proofness that can be potentially achieved. Additionally, it would be valuable to design an FPT algorithm for 2PS. Next, it is natural to define a \emph{weak 4PS blocking} where only the swapping agents need to strictly improve and the roommates are not worse off. 
Finally, applying our framework to practical roommate and housing assignment platforms could provide insights into real-world strategic behavior and potential adjustments to existing mechanisms. 

\section*{Acknowledgment}
We thank the anonymous reviewers for their useful comments.

\bibliography{arxiv}

\newpage
\appendix
\section{An algorithm for 4PS Assignment}
\label{app:DS_LS}
\citet{chan2016assignment} developed an algorithm to produce a 4PS assignment that has social welfare at  least $2/3$ of the optimal. It uses first an algorithm called  Double Matching (DM) and then the Local Searching (LS) algorithm running in time $O(n^3)$  and $O(n^5)$, respectively. 
Given an instance $\langle N, R, H, V \rangle$, the Double Matching algorithm constructs two maximum weight perfect matchings. First one (denoted by $M_1$) is between the agents in the graph $G_1$ on agents $N$ such that for two agents $i, j \in N$, $i \neq j$, there is an edge $(i, j)$ with the weight $h_{i}(j) + h_{j}(i)$. The second one (denoted by $M_2$) is a 1-2-bipartite matching between the agents and rooms in the graph $G_2$ with the vertex set $N \cup R$ and for each agent $i \in N$ and each room $r \in R$, there is an edge $(i,r)$ with the weight $v_{i}(r)$. Finally, to combine the two matchings, they consider each cycle (of length more than two) formed by the edges $M_1 \cup M_2$ of the two matchings and take the edges that correspond to a pairwise disjoint set of triples (each triple has two agents and a room) of highest total weight.
The complete algorithm is given in \Cref{alg:DM}.
Given the outcome of Double Matching as input, their Local Search algorithm repeatedly resolves the 4PS blocking pairs by swapping the agents involved in the blocking, giving 4-person stable outcome.

Although Chan et al. \cite{chan2016assignment} designed a greedy algorithm following the serial dictatorship mechanism to find a 4PS solution efficiently, we provide a counterexample to prove that it is not strategy-proof.
\propDM*
We use the following example to prove the statement.
\begin{example}\label{eg:doublematching}
We give an example of one cycle from the set $M_1 \cup M_2$, as shown in  \Cref{fig:DM1}. 
A set of partial preference values being used are given in \Cref{tab:DM_n_SP} where the values in the empty cells could be assumed to be any number smaller than $1$.
The edge weights are computed using the valuations. For instance, the weight of edge $e_1 = (a_1,r_1)$ is 
$v_{a_1}(r_1) = 4$, and the weight of edge $e_2=(a_1,a_2)$ 
 equals to 
$w(e_2)= h_{a_1}(a_2) + h_{a_2}(a_1) = 2 + 1$. According to the Double Matching algorithm, we get the following weights of the triples in the cycle shown in \Cref{fig:DM1}:
$W_0 = w(e_3) + w(e_6) + w(e_9) = 3+2+2 = 7$; 
$W_1 = w(e_1) + w(e_4) + w(e_7) = 4+1+1 = 6$; $W_2 = w(e_2) + w(e_5) + w(e_8) = 3+4+2 = 9$.

Therefore, $\min_{t\in \{0,1,2\}} W_t = W_1$, and we remove the edges $e_1, e_4$ and $e_7$, outputting an assignment $\{(a_3, a_4, r_1)$, $(a_1, a_2, r_2)$, $(a_5, a_6, r_3)\}$, as shown in \Cref{fig:DM2}. 

However, suppose agent $a_1$ lies about its valuation towards room $r_1$. Let agent $a_1$ misreports the valuation to be $v'_{a_1}$ such that $v'_{a_1}(r_1) = 6$, as shown by the red line in figure \ref{fig:DM3}. All other values remain the same. Under the misreported valuation, the new weight $w(e_1) = 6$ and $W_1 = 8$. As a result, $W_0$ has the minimum weight as opposed to $W_1$ and the algorithm will remove edges in $W_0$.
Thus, the assignment will be $\{(a_1, a_2, r_1)$, $(a_5, a_6, r_2)$, $(a_3, a_4, r_3)\}$ where $a_1$ gets strictly better utility. This example shows that the agents have the incentives to lie to get their favorite rooms or roommates when applying DM, so DM is not strategy-proof. As the Local Searching algorithm uses DM as its foundation to find a 4PS assignment, their algorithm for 4PS is also not strategy-proof.
\end{example}

\begin{figure}
    \centering
    \begin{subfigure}{0.31\textwidth}
        \centering
        \includegraphics[width=\textwidth]{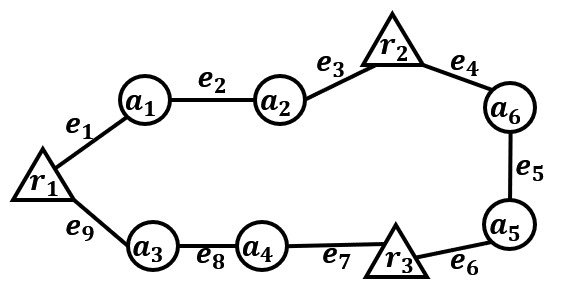}
        \caption{The Original Cycle in $M_1 \cup M_2$}
        \label{fig:DM1}
        \vspace{5mm}
    \end{subfigure}
    \hfill
    \begin{subfigure}{0.31\textwidth}
        \centering
        \includegraphics[width=\textwidth]{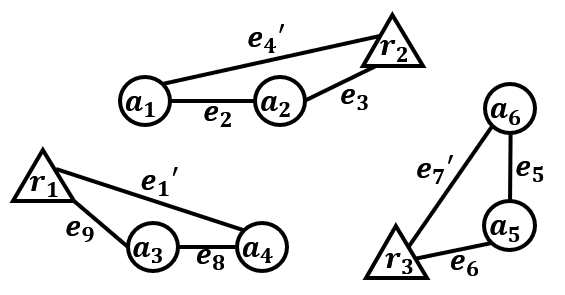}
        \caption{Applying Double-Matching Algorithm}
        \label{fig:DM2}
        \vspace{5mm}
    \end{subfigure}
    \begin{subfigure}{0.31\textwidth}
        \centering
        \includegraphics[width=\textwidth]{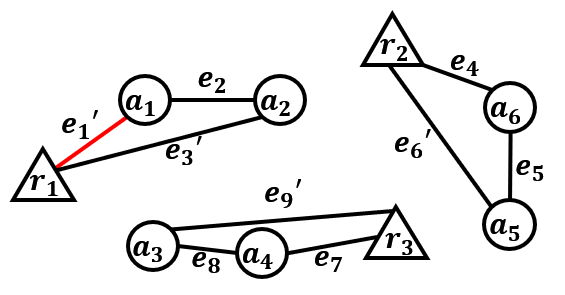}
        \caption{A case of $a_1$ lying}
        \label{fig:DM3}
        \vspace{5mm}
    \end{subfigure}
    \caption{\small{An Example Cycle and the Application of Double-Matching Algorithm}}
    \label{fig:DMEg}
        \vspace{5mm}
\end{figure}


\begin{table}[b]
\small
    \centering
    \begin{tabular}{|m{4mm}|m{3mm}|m{3mm}|m{3mm}|m{3mm}|m{3mm}|m{3mm}|m{5.9mm}|m{4mm}|m{3mm}|}
    \hline
         & $a_1$ & $a_2$ & $a_3$ & $a_4$ & $a_5$ & $a_6$ & $r_1$ & $r_2$ &$r_3$\\
         \hline
        $a_1$ &--{}{} & 2 &  & & &  & 4 \textcolor{red}{(6)} &  &\\
        \hline
        $a_2$  &1 &--{}{}&   &   &   &   &   & 3 & \\
       \hline
        $a_3$ &  &  &--{}{}& 1 &  && 2 & &\\
        \hline
       $a_4$ &  &  & 1 &--{}{}&  &   &  & &1\\
       \hline
       $a_5$ &  &  &  & &--{}{} & 2 &  &  &2\\
       \hline
       $a_6$ &  &  &  &  & 2  &--{}{}&  & 1& \\
       \hline
    \end{tabular}
    \caption{Valuations: an example of DM not being strategy-proof.}
    \label{tab:DM_n_SP}
\end{table}


\begin{algorithm}
\SetAlgoNoLine
\caption{Double-Matching (DM) \cite{chan2016assignment}}\label{alg:DM}
Find a maximum-weight perfect matching $M_1$ of $G_1 = (N,E_H)$\;
Find a maximum-weight perfect 1-2-matching $M_2$ in bipartite graph $G_2 =(N,R,E_V)$ \;
Let $C = \{C_1,C_2...,C_k\}$ the set of cycles of $M_1 \cup M_2$ \;
\For{$i = 0$ to $k$}{ 
    Denote the edges of $C_i$ by $e_1,...,e_{3l}$ in a cyclic order \;
    (starting from an arbitrary vertex)\;
    \If{$l\geq 2$}{ 
        \For{$t=0$ to 2}{
            $W_t = \sum_{j=1,..., 3l: \, j\equiv t \, mod \; 3}w(e_j)$ \;
        }
        Let $t^\ast \in \{0,1,2\}$ be a minimizer of $min_t W_t$\;
        Remove edges $e_j$ for all $j \equiv t^\ast$ mod 3 \;
    }
}
The subgraph is now a collection of vertex-disjoint connected triples $(i, j, r)$ \;
\KwRet{The set of these triples $(i,j,r)$ as an assignment}
\end{algorithm}
\begin{algorithm}
\SetAlgoNoLine
\caption{Local Search (LS) \cite{chan2016assignment}}\label{alg:LS}
Start from assignment outputted by Double-Matching mechanism\;
\While{there is a 4PS blocking pair $(i,j)$}{
    Swap the people $i$ and $j$\;
}
\KwRet{The current room assignment}
\end{algorithm}

\section{Explanatory Example of SD and Omitted Proofs}

\begin{algorithm}
\SetAlgoNoLine
\KwIn{A roommate matching instance $\langle N,R,H,V\rangle$} 
\KwOut{A 4PS assignment} 
\BlankLine
Let $\sigma = (a_1, a_2, \dots, a_{2n})$ denote an fixed sequence of agents\;
Set assignment $\mu = \emptyset$\;
Let $N' = \{a_1, a_2, \dots, a_{2n}\}$ be the set of unmatched agents in $\mu$\;
Let $R' = \{r_1, r_2, \dots, r_n\}$ be the set of unmatched rooms in $\mu$\;
    \For{ $i=1$ to $2n$}{
        \If{$a_i$ in $N'$}{
            Let $a^\ast$ be a maximizer with $h_{a_i}(a^\ast) \geq  h_{a_i}(a_j)$, $\forall a_j \in N'$\;
            Let $r^\ast$ be a maximizer with $v_{a_i}(r^\ast) \geq  v_{a_i}(r_k)$,  $ \forall r_k \in R'$\;
            Add the triple $(a_i, a^\ast, r^\ast)$ to $\mu$\; 
            Delete $a_i$ and $a^\ast$ from $N'$\;
            Delete $r^\ast$ from $R'$\;
        }
    }
\KwRet{A set of triples as an assignment $\mu$}
\caption{Serial Dictatorship}\label{alg:SD4PS}
\end{algorithm}

The mechanism of the Serial Dictatorship algorithm is very straightforward. In our model,the number of agents is twice the number of rooms. The "dictator" in each round has to choose its preferred room and another agent as its roommate, and the final allocation assigns two agents to each room.  We calculate the running time of \Cref{alg:SD4PS}. The loop runs $2n$ times, and each iteration takes $O(n)$-time to find the maximizer $a^\ast$ and $r^\ast$. Therefore, the total time complexity for the SD algorithm is $O(n^2)$. 
Additionally, we present \Cref{eg:SD} to show an instance of applying \Cref{alg:SD4PS}.  
The serial dictatorship algorithm has several desirable properties that we  consider in the following. 

\paragraph{Pareto Optimality}
We show that the outcome of \Cref{alg:SD4PS} is Pareto optimal for strict utility profiles, that is, the utility of an agent over the pairs of room and roommate forms a strict ordering. 

\begin{lemma}
\label{lem:SD_PE}
    Any assignment output by \Cref{alg:SD4PS} is Pareto optimal for strict utility values. 
\end{lemma}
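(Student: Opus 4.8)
The plan is to argue by contradiction, following the classical serial-dictatorship Pareto-optimality argument, but with extra bookkeeping to handle the fact that in our model a ``dictator'' consumes \emph{both} a room and a roommate rather than just a room. Suppose $\mu$ is the assignment returned by \Cref{alg:SD4PS} and that some assignment $\mu'$ Pareto dominates it. The goal is to deduce $\mu' = \mu$, which is absurd, since Pareto domination requires at least one agent to be strictly better off in $\mu'$.

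First I would order the triples of $\mu$ by the order in which \Cref{alg:SD4PS} creates them: write $T_1, T_2, \dots, T_n$, where $T_t$ is the triple formed by the $t$-th agent that is still unmatched when its turn in $\sigma$ arrives (the $t$-th effective ``dictator'', which I denote $d_t$). The heart of the argument is the invariant, proved by induction on $t$, that $T_1, \dots, T_t$ are also triples of $\mu'$. Granting this for all $t$ gives $\mu' = \mu$ and closes the proof.

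For the inductive step, consider the moment \Cref{alg:SD4PS} processes $d_t$. At that point the sets of unmatched agents $N'$ and unmatched rooms $R'$ are exactly $N$ and $R$ minus the agents and rooms already locked into $T_1, \dots, T_{t-1}$. By the induction hypothesis those earlier triples are present in $\mu'$ as well, so in $\mu'$ every agent occurring in $T_1, \dots, T_{t-1}$ is matched within those triples; hence $d_t$'s roommate in $\mu'$ must lie in $N' \setminus \{d_t\}$ and its room in $R'$. Therefore
\[
U_{\mu'}(d_t) \;\le\; \max_{j \in N' \setminus \{d_t\},\, r \in R'} \bigl(h_{d_t}(j) + v_{d_t}(r)\bigr) \;=\; U_{\mu}(d_t),
\]
the equality holding because the right-hand maximum is precisely the quantity $d_t$ greedily optimizes when selecting $a^\ast$ and $r^\ast$. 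Since $\mu'$ weakly dominates $\mu$ we also have $U_{\mu'}(d_t) \ge U_{\mu}(d_t)$, so the two are equal; strictness of the utility profile then forces $d_t$ to have the \emph{same} roommate and room in $\mu'$ as in $\mu$, i.e.\ $T_t$ is a triple of $\mu'$. This advances the induction.

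The step I expect to require the most care is the inductive step's claim that $d_t$'s roommate in $\mu'$ is still among the available agents $N'$ — this is exactly the point where our model diverges from ordinary house allocation, since here an agent can be removed from the market either by acting as a dictator or by being \emph{chosen} as someone else's roommate, and both kinds of removal must be accounted for. I would state explicitly that the agents of $T_1, \dots, T_{t-1}$ are matched among themselves in $\mu'$ (so none of them can be $d_t$'s partner), which is what legitimizes restricting the maximum to $N' \times R'$. I would also flag where strictness is essential: without it, equal utility would not pin down a unique $(\text{roommate}, \text{room})$ pair, the induction would break, and indeed the ties case is exactly where Pareto optimality fails (cf.\ the discussion of ties in the introduction).
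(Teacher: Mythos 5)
Your proof is correct, but it is organized differently from the paper's. The paper argues via a ``first strict improver'': it takes the first agent $j$ in $\sigma$ whose utility strictly increases in the dominating assignment $\mu'$, observes that $j$ can only improve by taking a room or roommate already consumed by an earlier dictator $i$, and then uses strictness to conclude that $i$ must strictly worsen, contradicting domination. You instead run an induction over the triples in the order SD creates them, showing that any weakly dominating $\mu'$ must contain $T_1,\dots,T_t$ for every $t$, hence $\mu'=\mu$, which is incompatible with someone strictly improving. The two arguments use the same engine (greedy optimality of each dictator over the remaining market, plus strictness to turn equal utility into an identical roommate--room pair), but your decomposition buys two things: it establishes the slightly stronger fact that no assignment even weakly dominates the SD output without coinciding with it, and it sidesteps a point the paper's write-up glosses over, namely that the first strict improver might be a \emph{chosen} roommate rather than a dictator --- your induction pins down the non-dictator partners automatically once each triple is forced into $\mu'$. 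One small point worth making explicit in your write-up: the identity $\max_{j\in N'\setminus\{d_t\},\,r\in R'}\bigl(h_{d_t}(j)+v_{d_t}(r)\bigr)=U_{\mu}(d_t)$ relies on additivity, since \Cref{alg:SD4PS} maximizes the roommate term and the room term separately, and only for additive utilities does separate maximization coincide with joint maximization over $N'\times R'$.
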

\begin{proof}
    We show that there is no other assignment that can make some agents better off without making others worse off. Suppose that 
    there is a triple $(j, j', r_j)$ in the output assignment $\mu$, and
    there exists an assignment $\mu'$ where agent $j$ is the first agent in the ordering $\sigma$ who   strictly improves and no other agent is worse off. Suppose that $j$ is assigned to a more preferred room $r_i$ (or a preferred roommate $i'$) in $\mu'$.  As the mechanism of SD is that agent $j$ chooses its most preferred room $r_j$ (or agent $j'$) from the remains of the market, so agent $j$ can only be better off by choosing a room or agent that was already chosen and removed by a previous round. Therefore, in the sequence $\sigma$, we have $i \succ j$ (i.e., agent $i$ has a higher priority to make the selection than agent $j$) such that in the output assignment $\mu$, room $r_i$ (resp. agent $i'$) is assigned to agent $i$.  
 However, since $j$ is the first agent in $\sigma$ who strictly improves in $\mu$, agent $i$ does not strictly improve. Since we assumed that the utility values are strict and agent $i$ chose the best room and roommate available in the SD algorithm, agent $i$ can only worsen when room $r_i$ (or agent $i'$) is assigned to agent $j$. Thus, the output assignment is Pareto optimal.
 
 However, the PO poperty does not hold when the utility values are not strict, because it is possible for agent $i$ to keep the same utility in assignment $\mu'$ when $h_i(i') = h_i(j')$ or $v_i(r_i) = v_i(r_j)$. 
 
\end{proof} 

\paragraph{Strategy-proofness}
Another important property of the original SD algorithm is the strategy-proofness. We prove that the SD algorithm keeps this property when being extended for our model. 
\begin{lemma}\label{lem:SD4PS-SP}
    \Cref{alg:SD4PS} is strategy-proof.
\end{lemma}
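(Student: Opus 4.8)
The plan is to use the standard serial-dictatorship argument, adapted to the two-dimensional (roommate and room) choice. Fix an arbitrary agent $i$ occupying some position $k$ in the order $\sigma$, and fix the reported preferences $h_{-i}, v_{-i}$ of everyone else. First I would establish the key invariant: the state of the market --- the set $N'$ of unmatched agents and the set $R'$ of unmatched rooms --- at the beginning of every step up to and including step $k$ is independent of agent $i$'s reported values $h_i, v_i$. This follows by induction on position in $\sigma$: each dictator $a_j$ with $j < k$ selects a roommate and a room maximizing its own $h_{a_j}$ and $v_{a_j}$ over the current pool, and by the inductive hypothesis that pool does not depend on $i$'s report; hence $a_j$'s choice, and therefore the pool handed to the next dictator, is also independent of $i$'s report.

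Next I would split into two cases according to whether $i$ is already matched when step $k$ is reached. In the first case, some earlier dictator $a_j$ (with $j<k$) selected $i$ as a roommate; then $i$'s triple, and hence its utility, is fixed entirely by $a_j$'s choice, which the invariant shows is independent of $i$'s report. So $i$ receives the same outcome regardless of what it reports, and cannot gain; note that whether $i$ falls into this case is itself independent of $i$'s report, so it cannot escape the case by lying. In the second case, $i$ reaches its own turn unmatched, facing a fixed pool $(N', R')$. Here I would invoke additivity: since $U_\mu(i) = h_i(\mu_N(i)) + v_i(\mu_R(i))$ and the roommate and room are chosen independently from the product set $N' \times R'$, separately maximizing $h_i$ over $N'$ and $v_i$ over $R'$ --- exactly what \Cref{alg:SD4PS} does under truthful reporting --- yields the pair of maximum true utility available to $i$. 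Any misreport makes $i$ select some other pair from the same fixed pool, whose true utility is weakly smaller; and once $i$ is matched its utility is frozen, since later steps act only on agents still in the market, which no longer include $i$. Hence truthful reporting is weakly dominant for $i$.

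Combining the two cases gives $U_{\mathcal{A}(I)}(i) \ge U_{\mathcal{A}(I')}(i)$ for every deviation $I'$, which is precisely strategy-proofness. The main obstacle --- and the only place this departs from the classical house-allocation proof --- is the second case: one must verify that optimizing separately over roommates and over rooms coincides with optimizing over pairs, and that an agent's choice of roommate (which also removes a second agent from the market) cannot rebound to affect its own utility. Both points rest on additivity of utilities together with the fact that an agent's assignment is finalized the moment it is matched, so I would state these two observations explicitly before concluding.
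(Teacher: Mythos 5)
Your proof is correct and follows essentially the same route as the paper's: fix the deviating agent, observe that the pool it faces at its turn is independent of its own report, and argue that truthfully maximizing $h_i$ over $N'$ and $v_i$ over $R'$ separately yields the best available pair by additivity, so any misreport is weakly worse. Your version is in fact somewhat more complete than the paper's, which implicitly assumes the agent reaches its own turn as a dictator and does not explicitly state the pool-invariance invariant or handle the case where the agent is claimed as a roommate before its turn.
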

\begin{proof}[Proof of \Cref{lem:SD4PS-SP}]
\begin{sloppypar}
    Without loss of generality, we assume that there is an agent $j$ such that $i \succ j$ in the order $\sigma$, and agent $j$ is matched with agent $p$ and room $q$ in the output assignment $\mu$ if it honestly reports its preferences. 
    
    Assume that agent $j$ misreports its preferences, leading to an assignment $\mu'$, and it claims that it prefers agent $p'$ over agent $p$ (or it prefers room $q'$ over room $q$). When it is agent $j$'s turn to be the ``dictator'', the remaining agent set is $N' \subset N$ and the remaining room set is $R' \subset R$. We consider two cases separately. 
    \begin{enumerate}
        \item Suppose agent $p'$ (or room $q'$) was selected by agent $i$. Then agent $p'$ (resp. room $q'$) is not in the set $N'$ (resp. $R'$). According to the mechanism, agent $j$ cannot select it, so it goes to the next top choices, and agent $p$ and room $q$ are assigned to agent $j$. The output assignment $\mu'$ is the same as assignment $\mu$. Thus, $j$ does not improve by misreporting.
        \item Agent $p'$ (or room $q'$) is still in the set $N'$ (or $R'$). According to the mechanism, agent $p'$ and room $q'$ are assigned to agent $j$. But the utility of agent $j$ in assignment $\mu'$ is smaller than its utility in assignment $\mu$, i.e. $h_{j}(p') < h_{j}(p)$ (or $v_{j}(q') < v_{j}(q)$).
    \end{enumerate}
    As a result, agent $j$ get either an equal or a smaller utility by lying. So agents have no incentive to lie and \Cref{alg:SD4PS} is strategy-proof.
\end{sloppypar}
\end{proof}

\begin{example}
\label{eg:SD}
As shown in figure \ref{fig:SDeg1}, the agent set $N = \{a, b, c, d, e, f\}$, and the room set $R = \{i, j, k\}$. We simply determine a sequence that is the same as the alphabetical order, and the agents' preference values over other agents and rooms are given in table \ref{tab:SDeg}. The first round is shown in figure \ref{fig:SDeg2}, agent $a$ is the dictator and it selects its most preferred roommate $c$ and room $i$. Room $i$, agent $a$ and $c$ are removed from the market. Then it comes to the second round as shown in figure \ref{fig:SDeg3}, agent $b$ is the dictator and it picks the most preferred agent $f$ and room $j$ from the remaining. In the final round, agent $d$ selects agent $e$ and room $k$, and the final assignment is shown in figure \ref{fig:SDeg4}.
\end{example}
\begin{figure}
        \centering
        \begin{subfigure}[b]{0.23\textwidth}
            \centering
            \frame{\includegraphics[width=\textwidth]{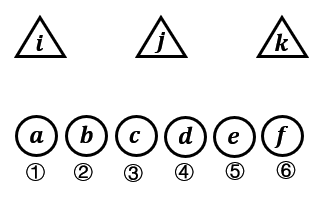}}
            \caption{{\small Initial market}}    
            \label{fig:SDeg1}
        \vspace{5mm}
        \end{subfigure}
        \begin{subfigure}[b]{0.23\textwidth}  
            \centering 
            \frame{\includegraphics[width=\textwidth]{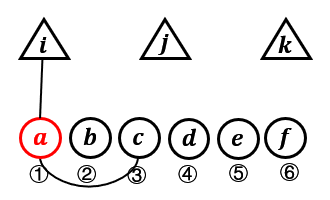}}
            \caption{{\small Round 1}}    
            \label{fig:SDeg2}
        \vspace{5mm}
        \end{subfigure}
        \begin{subfigure}[b]{0.23\textwidth}   
            \centering 
            \frame{\includegraphics[width=\textwidth]{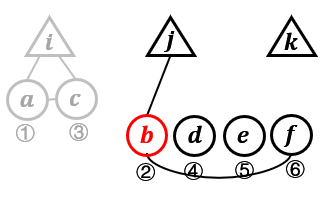}}
            \caption{{\small Round 2}}    
            \label{fig:SDeg3}
        \vspace{5mm}
        \end{subfigure}
        \begin{subfigure}[b]{0.23\textwidth}   
            \centering 
            \frame{\includegraphics[width=\textwidth]{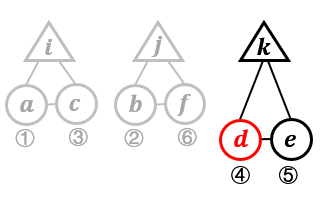}}
            \caption{{\small Final allocation}}    
            \label{fig:SDeg4}
        \vspace{5mm}
        \end{subfigure}
        \caption{\small An explanatory example of SD} 
        \label{fig:SDeg}
        \vspace{5mm}
    \end{figure}

\begin{table}
    \centering
    \begin{minipage}{0.48\textwidth}
    \centering
    \begin{tabular}{|c|c|c|c|c|c|c|c|c|c|}
    \hline
         & \textbf{a} & \textbf{b} & \textbf{c} & \textbf{d} & \textbf{e} & \textbf{f} & \textbf{i} & \textbf{j} & \textbf{k} \\
         \hline
        \textbf{a} & --{}{} & 5 & 7 & 4 & 4 & 2 & 5 & 3 & 1\\
        \hline
        \textbf{b} & 7 & --{}{} & 5 & 1 & 2 & 3 & 3 & 4 & 3\\
        \hline
        \textbf{c} & 2 & 2 & --{}{} & 4 & 3 & 1 & 2 & 5 & 3\\
        \hline
        \textbf{d} & 4 & 6 & 2 & --{}{} & 1 & 3 & 3 & 4 & 4\\
        \hline
        \textbf{e} & 3 & 5 & 2 & 2 & --{}{} & 6 & 3 & 1 & 4\\
        \hline
        \textbf{f} & 7 & 2 & 4 & 5 & 6 & --{}{} & 4 & 2 & 2\\
        \hline
    \end{tabular}
    \caption{\small{Explanatory example of SD: preference values}}
    \label{tab:SDeg}
    \end{minipage}
    \hfill
    \begin{minipage}{0.48\textwidth}
    \centering
    \begin{tabular}
    {|m{3mm}|m{3mm}|m{3mm}|m{3mm}|m{3mm}|m{3mm}|m{3mm}|m{3mm}|m{3mm}|m{3mm}|}
    \hline
         & $a_1$ & $a_2$ & $a_3$ & $a_4$ & $a_5$ & $a_6$ & $r_1$ & $r_2$ &$r_3$\\
         \hline
        $a_1$ &--{}{} & 5 & 4 &3 &2 & 1 & 1 & 1&1\\
        \hline
        $a_2$  &1 &--{}{}& 5 & 4 & 3 & 2 & 1 & 1&1\\
       \hline
        $a_3$ & 5 & 4 &--{}{}& 3 & 2 & 1& 1 & 1&1\\
        \hline
       $a_4$ & 5 & 4 & 1 &--{}{}& 3 &  2 & 1 & 1&1\\
       \hline
       $a_5$ & 5 & 4 & 3 & 2&--{}{} &  1 & 1 & 1&1\\
       \hline
       $a_6$ & 5 & 4 & 3 & 2 &  1 &--{}{}& 1 & 1&1\\
       \hline
    \end{tabular}
    \caption{Vauations in \Cref{eg:SD2PSblocking}: a worst-case scenario example using SD}
    \label{tab:SD2PSnumber}
    \end{minipage}
\end{table}

\subsection{SD is not 2PS}
Serial dictatorship offers a low level of equality because it disadvantages agents with low priority, overlooking their preferences. This unequal treatment of the agents results in instability, as low-priority agents may be assigned to a roommate or room they dislike, leading to dissatisfaction and a lack of adherence to the assignment.
Additionally, we give an example of the worst-case scenario illustrated by \Cref{thm:SD2PSblocking}. 
\begin{example}\label{eg:SD2PSblocking}
Let the priority order $\sigma = \{a_1, a_2, a_3, a_4, a_5, a_6 \}$, where the size of agent set $2n=6$. The preference values are given in \Cref{tab:SD2PSnumber}. For simplification, we let all the agents be indifferent about the room (i.e., give the same preference value 1 to all rooms). The assignment output by the serial dictatorship algorithm is $\{(a_1, a_2, r_1)$, $(a_3, a_4, r_2)$, $(a_5, a_6, r_3)\}$. 

The output assignment has blocking pairs: $(a_2, a_3)$, $(a_2, a_4)$, $(a_2, a_5)$, $(a_2, a_6)$, $(a_4, a_5)$, and $(a_4, a_6)$, and the number of the blocking pairs is exactly equal to $3^2 - 3 = 6$.
\end{example}


\section{Modification of TTC}
\subsection{The CTTC algorithm}\label{app:CTTC}
We present the algorithm in detail here.
Therefore, we get \Cref{alg:CTTC-incomplete}.
\begin{algorithm}
\SetAlgoNoLine
\caption{Contractual Top Trading Cycle}
\label{alg:CTTC-incomplete}
\KwIn{an instance $\langle N, R, H, V \rangle$ and an assignment $\mu$} 
\While{True}{
    \textbf{Phase 1. Initialization}
    \\Let $G_\mu = (V, E)$ be a directed graph where $V = N$,
    arc $(i, j) \in E$ if and only if $i \neq j$ and 
    $j \in \argmax_{s \in N} U_{\mu(i \leftrightarrow s)}(i)$, $U_{\mu(i \leftrightarrow j)}(i)> U_{\mu}(i)$, and 
    $h_{\mu_N(j)}(i) \geq h_{\mu_N(j)}(j)$ for $\{i,j\} \subseteq N$;
    \\ \textbf{Phase 2} is the same as \Cref{alg:TTC1}\;
    }
\end{algorithm}

\paragraph{Proof \cref{lem:SW}}
For convenience, we use the following notation: if there exists a top trading cycle $TC$ in an assignment $\mu$, we use $\mu_{TC}$ to denote the updated assignment after trading the agents following the trading cycle.
\lemSW*
\begin{proof}[Proof of \cref{lem:SW}]
\begin{sloppypar}
    Suppose there exists a top trading cycle $TC = (i_1, i_2, \dots, i_{k})$. Recall that the assignment $\mu_{TC} = (\mu \setminus \bigcup_{j=1}^k (i_j, \mu_N(i_j), \mu_R(i_j))) \cup \bigcup_{j=1}^k (i_j, \mu_N(i_{j+1}), \mu_R(i_{j+1}))$, where $j+1$ is taken mod $k$. There are three categories of agents, namely, agents who are in the trading cycle, their partners, and others. We examine their utilities respectively. 
    
      Let $b$ be an agent in the trading cycle, and there is an arc $(b, c)$. We claim that $U_{\mu_{TC}}(b) > U_{\mu}(b)$. According to the mechanism of CTTC, an arc $(b, c)$ is created only if $c = \argmax_{s \in N} U_{\mu(b \leftrightarrow s)}(b)$. As a result, the claim holds because $U_{\mu_{TC}}(b) = U_{\mu(b \leftrightarrow c)}(b) > U_{\mu}(b)$.

    Let $b$ be a partner of an agent who is in the trading cycle, then we claim that $U_{\mu_{TC}}(b) \geq U_{\mu}(b)$. Let $\mu_N(i_{j+1})=b$, where $i_{j+1} \in TC$. According to the mechanism of the CTTC, this claim holds because an arc $(i_j, i_{j+1})$ is created only if $b$ accepts this arc. Therefore, $b$ prefers $i_j$ at least as much as $i_{j+1}$, i.e., $h_{b}(i_j) \geq h_{b}(i_{j+1})$. The rooms agent $b$ assigned to are the same in $\mu$ and in $\mu_{TC}$. As a result, we have the result that $U_{\mu_{TC}}(b) \geq U_{\mu}(b)$.
    
    Let $b$ be an agent s.t. $b \notin TC$ and $b \neq \mu_N(i_j)$ s.t. $i_j \in TC$. Since the roommate and room assigned to $b$ are the same in $\mu$ and in $\mu_{TC}$, the utilities of $b$ are the same in $\mu$ and in $\mu_{TC}$: $U_{\mu}(b) = U_{\mu_{TC}}(b)$.

    Thus, after resolving a trading cycle and the assignment is updated from $\mu$ to $\mu_{TC}$, the utilities of agents in the trading cycles increase, the utilities of their roommates increase or stay the same, and the utilities of other agents stay the same. As a result, the sum of utilities (i.e. the social welfare) of $\mu_{TC}$ is greater than that of $\mu$. 
\end{sloppypar}
\end{proof}

\paragraph{Pareto Optimality}
\begin{table}[t]
\centering
        \centering
        \begin{tabular}{|m{3mm}|m{3mm}|m{3mm}|m{3mm}|m{3mm}|m{3mm}|m{3mm}|}
    \hline
         & $a_1$ &$a_2$  & $a_3$ & $a_4$ & $r_1$ & $r_2$\\
         \hline
        $a_1$ & --{}{} & 7 & 1 &  2& 3 & 5\\
         \hline
       $a_2$  & 7 & --{}{} & 2 & 1 & 3 & 5\\
         \hline
        $a_3$ & 1 & 2 & --{}{} & 7 & 5 & 3\\
         \hline
       $a_4$  & 2 & 1 &  7& --{}{} & 5 & 3\\
         \hline
    \end{tabular}
    \caption{\small{Valuations in \Cref{eg:ttcParetoEg}}}
    \label{tab:ttcParetoEg}
 \end{table}
\begin{example}\label{eg:ttcParetoEg}
We consider 4 agents and 2 rooms in this example, and the preference values are given in table \ref{tab:ttcParetoEg}. Let the initial assignment $\mu = \{(a_1, a_2, r_1) (a_3, a_4, r_2)\}$. One can check that all the agents get a utility of 10 (7 from roommate and 3 from room) and prefer their current allocation in the sense that there is no arc in the trading graph (as defined in \Cref{alg:CTTC-incomplete}).
%
As a result, there is no trading cycle and the CTTC outputs the initial assignment directly. However, there exists an assignment $\mu' = \{(a_3, a_4, r_1)$, $(a_1, a_2, r_2)\}$ where each agent improves (every agent gets a utility of $12$). Thus, the initial assignment is not PO.
\end{example}

\paragraph{4-Person Stability} CTTC is not 4PS.
\begin{table}[h]
\centering
\begin{tabular}{|m{3mm}|m{3mm}|m{3mm}|m{3mm}|m{3mm}|m{3mm}|m{3mm}|m{3mm}|m{3mm}|m{3mm}|}
\hline
     & $a$ & $b$ & $c$ & $d$ & $e$ & $f$ & $r_1$ & $r_2$ & $r_3$\\
     \hline
    $a$ &--{}{} & 9 & 5 & 4 & 3 & 2 & 3 & 2 & 1\\
    \hline
    $b$ &9 &--{}{}& 3  & 5  & 7  &4  & 3  & 2  & 1 \\
   \hline
    $c$ &  2 &  9 &--{}{}& 1 & 5 & 7  & 3 & 1  &  2 \\
    \hline
   $d$ & 5 & 9  & 1  &--{}{}& 7  &  3 & 3 & 1  & 2\\
   \hline
   $e$ & 5  &  9 & 4 &  7 &--{}{} &  1 &  3 & 2 & 1 \\
   \hline
   $f$ &  4 &9  &  7 & 2 & 1 &--{}{}& 3 & 2  &  1 \\
   \hline
\end{tabular}
\caption{Valuations in \Cref{eg:CTTCN4PS}}
\label{tab:TTCRn4PS}
        \vspace{5mm}
\end{table}

\subsection{The CTTCR Algorithm}

\paragraph{Intuitive Idea of \Cref{alg:CTTCR}}.
The algorithm is based on TTC. 
We modify the original TTC by creating a contractual schema where the roommates are also happy with the trade. With this restriction, resolving a trading cycle leads to the increase in social welfare. Since there is a maximum social welfare for a given assignment, the algorithm terminates. 
However, the new restriction means that some agents may be left with no outgoing arc, and some 4PS blocking pairs may be left unresolved when the algorithm terminates. To avoid this, we delete the agents with no arc and ensure the existence of a trading cycle in the graph. 
Moreover, after resolving a trading cycle, the deleted agents can potentially form a 4PS blocking pair with the agents left in the graph. Thus, we add them back and update the graph to consider all the possibilities. The level of social welfare is a good indicator here. We trade the agents according to a top trading cycle. Agents with no arc are deleted from the graph when social welfare level keeps unchanged (i.e. no trading cycle in the graph), and deleted agents are added back to the graph when social welfare changes. 

\paragraph{2-Person Stability}
 \begin{example}\label{eg:CTTCRn2PS}
Consider an instance with $2n$ agents $a_1,a_2,\dots, a_{2n}$ and $n$ rooms $r_1,r_2, \dots, r_n$. For each $i \in [2n]$, if $i$ is odd, then agent $a_i$ values agent $a_{i+1}$ at $1$ and every other agent at zero; it values room $r_{\frac{i+1}{2}}$ at zero and every other room at $2$. For each $i \in [2n]$, if $i$ is even, then agent $a_i$ values agent $a_{i-1}$ at $1$ and every other agent at zero; it values room $r_{\frac{i}{2}}$ at zero and every other room at $2$. Suppose the initial assignment  $\mu=\{(a_1,a_2,r_1), (a_3,a_4,r_2), \dots, (a_{2n-1}, a_{2n}, r_{n})\}$.

Then no agent is part of a 4PS blocking as for each agent $a_i$ they get utility $1$ in the current assignment $\mu$ but if its roommate swaps with another agent, it would get utility $0$.  Consequently, there is no outgoing arc from any of the agents in $G_{\mu}$ because of the contractual nature of the arcs in $G_{\mu}$ (as described in the Initialization phase of \Cref{alg:CTTC-incomplete}). Thus, the CTTCR algorithm terminates. However, each agent is in a 2PS blocking with every other agent that is not its roommate. For an agent $a_i$ who swaps with agent $a_j$, the utility of both $a_i$ and $a_j$ increases from one to two as they value the other person's room at $2$. Thus, we have that each agent is part of $(2n-2)$ 2PS blocking pairs. Therefore, in total, there are $2n^2-2n$ 2PS blocking pairs in $\mu$.
\end{example}


\section{Symmetric and Binary Preference}\label{app:symmbin}
\paragraph{Proof of \Cref{the:SB_SW}}
\theSBSW*
\begin{proof}[Proof of \Cref{the:SB_SW}]
Suppose $\mu$ be an arbitrary assignment, and triples $t_1=(a, b, r_1)$ and $t_2=(c, d, r_2)$ are in the assignment $\mu$, where $a,b,c,d \in N$ and $r_1,r_2 \in R$. We assume that there exists a 2PS blocking pair $(a, c)$. 
 We aim to show that the assignment $\mu'$, created from $\mu$ 
 by resolving the blocking pair $(a,c)$ by \emph{swapping agent $a$ and $c$}, has a greater social welfare than $\mu$.  
 Observe that the new triples in $\mu'$ are $t_1' = (b, c, r_1)$ and $t_2' = (a, d, r_2)$.  Let $SW$ and $SW'$ denote the social welfare of $\mu$ and $\mu'$, respectively. 

Next, we will define the utility of agents $a$ and $c$ and compare the social welfare of $\mu$ and $\mu'$.
The existence of a 2PS blocking pair $(a,c)$ implies equations \ref{eq:u_a} and \ref{eq:u_c}. 
\begin{equation}\label{eq:u_a}
    \text{From utility of }a: h_a(d) + v_a(r_2) > h_a(b) + v_a(r_1) 
\end{equation}
\begin{equation}\label{eq:u_c}
    \text{From utility of }c: h_c(b) + v_c(r_1) > h_c(d) + v_c(r_2) 
\end{equation}
Recall that social welfare is the sum of utilities from all the agents. 
We use $SW{(\mu \setminus \{t_1, t_2\})}$ to represent the sum of utilities of all the agents except for agents $a, b, c$ and $d$. The social welfare of assignment $\mu$ and $\mu'$ are as follows:
\begin{multline}
\label{eq:SW}
    SW = SW{(\mu \setminus \{t_1, t_2\})} + h_a(b) + h_b(a) + v_a(r_1) + v_b(r_1)\\ + h_c(d) + h_d(c) + v_c(r_2) + v_d(r_2)
\end{multline}
\begin{multline}
\label{eq:SW'}
    SW' = SW{(\mu' \setminus \{t_1', t_2'\})} + h_b(c) + h_c(b) + v_b(r_1) + v_c(r_1)\\ + h_a(d) + h_d(a) + v_a(r_2) + v_d(r_2)
\end{multline}
With the property of symmetric, we transform equations \ref{eq:SW} and \ref{eq:SW'} into the following equations:
\begin{dmath}
    SW = SW{(\mu \setminus \{t_1, t_2\})} + 2h_a(b) + 2h_c(d) + v_a(r_1) + v_b(r_1) + v_c(r_2) + v_d(r_2)    
\end{dmath}
\begin{dmath}
    SW' = SW{(\mu' \setminus \{t_1', t_2'\})} + 2h_c(b) + 2h_a(d) + v_b(r_1) + v_c(r_1) + v_a(r_2) + v_d(r_2)
\end{dmath}
The swap changes the triples $t_1, t_2$ in $\mu$ to the triples $t_1', t_2'$ in $\mu'$, and there are only 4 agents involved in these triples.  As a result, all the agents except for agents $a, b, c$ and $d$ have the same utilities in $\mu'$ as their utilities in $\mu$, indicating that $SW{(\mu \setminus t_1, t_2)} = SW{(\mu' \setminus t_1', t_2')}$. 

Equations \ref{eq:u_a} and \ref{eq:u_c} can be transformed to the following and we represent their difference by $X$:
\begin{multline*}
X = (h_a(d) + v_a(r_2) + h_c(b) + v_c(r_1)) - (h_a(b) + v_a(r_1) + h_c(d) + v_c(r_2)) > 0
\end{multline*}
Subtracting the old social welfare from the new social welfare, the result can be split into two parts. We have known that $X$ is greater than zero, and we discuss the remaining in different cases. 
\begin{dmath}
\label{eq:SW'_2}
    SW' - SW = 2h_c(b) + 2h_a(d) + v_c(r_1) + v_a(r_2) -
    [2h_a(b) + 2h_c(d) + v_a(r_1) + v_c(r_2)] 
    = (h_c(b) - h_c(d) + h_a(d) - h_a(b)) + X
\end{dmath}

 An agent $i$ has a valuation $v_i(j)$ for another agent $j$ and  $h_i(j) \in \{0, 1\}$. As a result, the value of $h_c(b) - h_c(d) \in \{-1, 0, 1\}$ (similar for $h_a(d) - h_a(b)$). In binary case, equations \ref{eq:u_a} and equation \ref{eq:u_c} indicate that $X \geq 2$, so $SW' - SW \geq 0$.  Thus, we prove the theorem.
\end{proof}

\paragraph{Non-binary valuations}
\begin{proposition}
    \Cref{the:SB_SW} does  not hold for non-binary preferences.
\end{proposition}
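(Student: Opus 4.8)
The plan is to prove the proposition by exhibiting an explicit counterexample: a symmetric but non-binary instance together with an assignment $\mu$ that contains a 2PS blocking pair whose resolution strictly \emph{decreases} the social welfare, contradicting the conclusion of \Cref{the:SB_SW}.

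First I would reuse the decomposition from the proof of \Cref{the:SB_SW}. Writing $t_1=(a,b,r_1)$ and $t_2=(c,d,r_2)$ for the two triples of $\mu$ and $(a,c)$ for the blocking pair, the same algebra (which invokes symmetry but never binariness) yields
\[
SW' - SW = \bigl(h_c(b) - h_c(d) + h_a(d) - h_a(b)\bigr) + X,
\]
where $X = \bigl(h_a(d)+v_a(r_2)+h_c(b)+v_c(r_1)\bigr) - \bigl(h_a(b)+v_a(r_1)+h_c(d)+v_c(r_2)\bigr) > 0$ is forced by the two blocking inequalities. In the binary regime the argument closed because $X\ge 2$ while the bracketed term is at least $-2$. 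The idea is to destroy exactly this slack: make the bracketed term very negative while keeping $X$ only slightly positive.

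Concretely, I would let each blocking agent value its \emph{current} roommate highly and the roommate it swaps \emph{into} not at all, so that $h_a(b)=h_c(d)=M$ is large and $h_a(d)=h_c(b)=0$; symmetry then forces $h_b(a)=h_d(c)=M$ and $h_b(c)=h_d(a)=0$. The swap is still individually profitable for $a$ and $c$ because each prefers the room it receives: setting $v_a(r_2)$ and $v_c(r_1)$ strictly larger than $M$ (and $v_a(r_1)=v_c(r_2)=0$) makes both blocking inequalities hold, so $(a,c)$ is a genuine 2PS blocking pair. A small instance such as $M=2$, $v_a(r_2)=v_c(r_1)=3$ (with the room valuations of $b$ and $d$ fixed arbitrarily) realizes this.

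Finally I would compute utilities directly: agents $a$ and $c$ each gain only the bounded room-driven amount, while by symmetry their former partners $b$ and $d$ each forfeit their happiness $M$, giving $SW'<SW$ (for the worked instance, $SW'-SW=-2$). The main obstacle — and the crux of the construction — is to keep the instance \emph{symmetric}: because $h_a(b)=h_b(a)$, the happiness $a$ relinquishes by leaving $b$ is lost a second time by $b$, yet the blocking condition constrains only the swapping agents' own utilities, not those of their partners. With unbounded non-binary happiness this doubled loss can be made to dominate the bounded gains. This also explains why the room valuations are indispensable: if the agents were indifferent between rooms, the bracketed term would equal $X$ and we would obtain $SW'-SW = 2X > 0$, exactly as in the binary case.
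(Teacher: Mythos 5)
Your proposal is correct and takes essentially the same route as the paper: the paper's Example~\ref{ex:non-binswap} is an explicit symmetric, non-binary counterexample with exactly the same structure (each blocking agent values its current roommate more than its new one but is lured by a better room, so symmetry doubles the partners' loss and yields $SW'-SW=-2$), analyzed via the same decomposition $SW'-SW=(h_c(b)-h_c(d)+h_a(d)-h_a(b))+X$. Your closing observation about indifference over rooms forcing $SW'-SW=2X>0$ also matches the remark the paper makes alongside that example.
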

We prove the proposition using the counterexample given in \Cref{ex:non-binswap}.
\begin{remark}
For \textit{Non-binary preferences},  we know that $X$ is greater than $0$ in the proof of \Cref{the:SB_SW}. However, it does not guarantee that $SW' - SW$ is greater than $0$. The term $SW' - SW$ is non-negative when the preferences are 
    (i) symmetric and binary; or,
    (ii) symmetric and the agents are indifferent about the rooms.  We provide more details in the counterexample given in \Cref{ex:non-binswap}.
\end{remark}

\begin{example}\label{ex:non-binswap}
We provide a counterexample showing that when valuation between agents is symmetric but non-binary, the social welfare of the assignment can decrease after swapping two agents who formed a 2PS blocking pair. 

\begin{table}[t]
    \centering
    \begin{tabular}{|m{3mm}|m{3mm}|m{3mm}|m{3mm}|m{3mm}|m{3mm}|m{3mm}|}
    \hline
         & $a$ & $b$ & $c$ & $d$ & $r_1$ & $r_2$\\
         \hline
        $a$ &--{}{} & 4 & 1 &2 &1 & 4 \\
        \hline
        $b$  &4 &--{}{}& 2 & 1 & 4 & 1\\
       \hline
        $c$ & 1 & 2 &--{}{}& 4& 4 & 1 \\
        \hline
       $d$ & 2 & 1 & 4 &--{}{}& 1 & 4  \\
       \hline
    \end{tabular}
    \caption{Preference values: an example when valuations are symmetric and non-binary}
\label{tab:eg_nonBi}
\end{table}

The initial assignment $\mu = \{(a, b, r_1), (c, d, r_2)\}$, and the preference values are given in table \ref{tab:eg_nonBi}. For agent $a, h_a(d) + v_a(r_2) = 2 + 4 \; > \; h_a(b) + v_a(r_1) = 4 + 1$.
For agent $c, h_c(b) + v_c(r_1) = 2 + 4 \; > \; h_c(d) + v_c(r_2) = 4 + 1$.
We get a new assignment $\mu'$ after swapping the blocking pair $(a,c)$, and we can calculate the difference between social welfare $SW' - SW$ by equation \ref{eq:SW'_2}.
First we calculate $X$ as follows, $X = (h_a(d) + v_a(r_2) + h_c(b) + v_c(r_1)) - (h_a(b) + v_a(r_1) + h_c(d) + v_c(r_2))$. Thus, using the preference values we get that $X = (2 + 4 + 2 + 4) - (4 + 1 + 4 + 1)=2$. Finally, $ SW' - SW 
    = (h_b(c) - h_b(a)) + (h_d(a) - h_d(c)) + X$. Therefore, $ SW' - SW=-2$.


When the value of $h_b(a)$ is much larger than $h_b(c)$ (similarly $h_d(c) > h_d(a)$), the result is negative. 
It cannot happen in the binary case because $h_b(a) - h_b(c) \leq 1$, $h_d(c) - h_d(a) \leq 1$, but $X \geq 2$.
Alternatively, when agents are indifferent about rooms, where $v_a(r_1) = v_a(r_2)$, $v_c(r_1) = v_c(r_2)$, $X = h_a(d) - h_a(b) + h_c(b) - h_c(d) > 0$. Thus, $SW - SW' = 2X > 0$.
\end{example}

\subsection{Algorithm for binary and symmetric valuations}
\textbf{Overview of \Cref{alg:Swapping}}.
Intuitively, the algorithm finds a blocking pair and swaps the agents in the blocking pair to construct a new assignment.
The termination can only happen when there is no 2PS blocking pair, and this indicates that the algorithm \ref{alg:Swapping} gives a 2PS solution. 

\begin{algorithm}
\SetAlgoNoLine
    \KwIn{A roommate matching instance $\langle N,R,H,V\rangle$, where $\forall i, j \in N: h_i(j) = h_j(i) = 1$ or $h_i(j) = h_j(i) = 0$, $\forall k \in N, r \in R: v_k(r) = \{0, 1\}$ and an assignment $\mu$}
    \KwOut{A $2$PS assignment}
    \BlankLine
    Let $G = (V, E)$ be an undirected graph where $V = N \bigcup R$ and 
    edge $(i, j) \in E$ if $h_i(j) = h_j(i) = 1$, edge$(i, r) \in E$ if $v_i(r) = 1$\;
    \While{$\exists \;i, j \in N$ s.t. $U_{\mu(i \leftrightarrow j)}
    (i) > U_{\mu}(i)$ and $U_{\mu(i \leftrightarrow j)}(j) > U_{\mu}(j)$}{
        Applying \textit{Selection Rule} $\mathcal{F}$ to select one of them\;
        Update $\mu = (\mu \setminus 
        ((i, \mu_N(i), \mu_R(i)), (j, \mu_N(j), \mu_R(i)))$
        $\bigcup ((i, \mu_N(j), \mu_R(j)), (j, \mu_N(i), \mu_R(i)))$\;
    }
    \KwRet{A set of triples as an assignment $\mu$}
\caption{Swapping Algorithm (Symmetric and Binary)}
\label{alg:Swapping}
\end{algorithm}

\paragraph{Strategy-proofness}
We provide an example that involves more than one agent who has the incentive to lie.
\begin{example}\label{ex:eg_SB_n_SP_2}
\begin{figure*}
        \centering
        \begin{subfigure}[b]{0.98\textwidth}
            \centering
            \includegraphics[width=\textwidth]{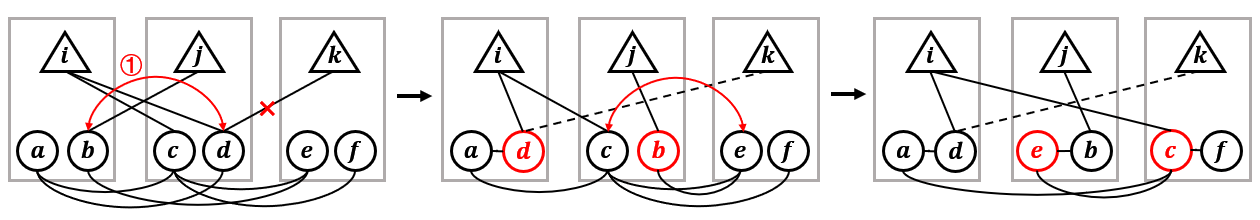}
            
            \caption{{\small Agent $d$ lies.}}    
            \label{fig:eg_SB_n_SP_2_1}
        \vspace{5mm}
        \end{subfigure}
        \begin{subfigure}[b]{0.98\textwidth}  
            \centering 
            \includegraphics[width=\textwidth]{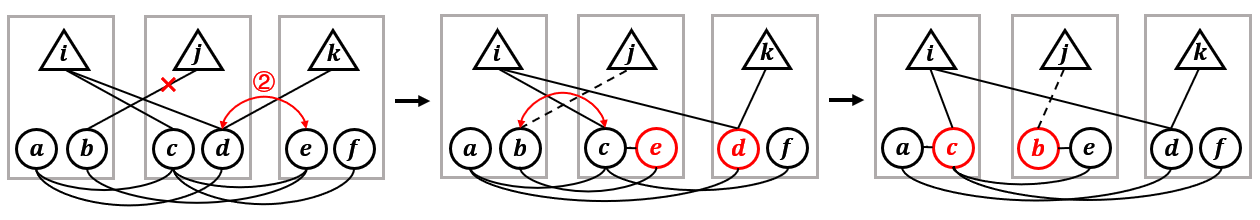}
            \caption{{\small Agent $b$ lies.}}    
            \label{fig:eg_SB_n_SP_2_2}
        \vspace{5mm}
        \end{subfigure}
        
        \caption{\small Example: more than 1 agent has incentive to lie in Symmetric \& Binary case.} 
        \label{fig:eg_SB_n_SP_2}
        \vspace{5mm}
    \end{figure*}
     There are agents $a, b, c, d, e, f \in N$, and room $i, j, k \in R$. The initial assignment is $\{(a, b, i)$, $(c, d, j)$, $(e, f, k)\}$. There are multiples blocking pairs $(a, c)$, $(a, d)$, $(b, c)$, $(b, d)$, $(d, e)$ and $(d, f)$ in the initial assignments. A \textit{Selection Rule $\mathcal{F}$} is applied to decide the pair to be swapped first.

    We discuss the agent's incentive to lie based on the blocking pair selected by the selection rule.
    Agent $d$ has the incentive to lie about its preference over room $k$ to prevent the forming of pair $2$ in figure \ref{fig:eg_SB_n_SP_2_2}. Because the swap of pair $2$ would increase the utility of agent $d$ by $1$, however, other order of swapping (e.g. figure \ref{fig:eg_SB_n_SP_2_1}) can increase its utility by $2$. In this case, an agent has the incentive to lie when involved in more than one blocking pair, where one of them gives it both a favorite room and a favorite roommate, and another only gives it a favorite room.
    In addition, agent $b$ has the incentive to misreport its preference over room $j$ to prevent the form of pair $1$ in figure \ref{fig:eg_SB_n_SP_2_1}. Because other order of swapping (e.g. figure \ref{fig:eg_SB_n_SP_2_2}) can give it a higher utility. In this case, an agent sees that a swap can potentially assign its favorite roommate (i.e. agent $e$) to its favorite room (i.e. room $j$), so it prevents the swap involving itself (i.e. pair 1) from happening first by lying. 

    Among all blocking pairs, we found that agent $b$ has an incentive to lie when \textit{Rule $\mathcal{F}$} selects $(b, d)$ or $(b, c)$ to swap first. Agent $d$ has incentive to lie when \textit{Rule $\mathcal{F}$} selects $(a, d)$, $(d, e)$, or $(d, f)$ to swap first. But there is no agent who has an incentive to lie if \textit{Rule $\mathcal{F}$} selects $(a, c)$ to swap first.
\end{example}

We conjecture that the above corollary can be extended for any selection rule $F$.
\begin{conjecture}
\label{con:nSP}
    \Cref{alg:Swapping} is not strategy-proof for any \textit{Selection Rule $\mathcal{F}$}.
\end{conjecture}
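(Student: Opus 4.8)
The plan is to prove the universal statement by exhibiting, for each selection rule $\mathcal{F}$, a symmetric binary instance together with an agent who strictly gains by misreporting a room valuation. The difficulty, already visible in \Cref{ex:eg_SB_n_SP_2}, is that a single fixed instance does not suffice: there the rule that swaps $(a,c)$ first is immune to manipulation, so any proof must either supply a family of instances tailored to defeat every rule, or argue structurally that no deterministic choice function can be safe on all inputs. I would first formalize a \emph{selection rule} as a function $\mathcal{F}$ that, given the reported profile and the current assignment, returns one blocking pair from the set of currently available 2PS blocking pairs; the mechanism is then the swap dynamics of \Cref{alg:Swapping} driven by $\mathcal{F}$, which terminates by \Cref{the:SB_SW}.

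Next I would isolate the single manipulation lever available under symmetric binary preferences. Since symmetry fixes the $h$ values and the paper already notes that agents can only lie about rooms, an agent's only freedom is to flip a room valuation $v_i(r)\in\{0,1\}$, which can suppress or create blocking pairs incident to $i$ and thereby reorder the sequence of swaps. The engine of manipulation in \Cref{ex:eg_SB_n_SP_2} is that an agent delays a swap involving itself (by dropping a room it currently blocks on) so that a strictly better room--roommate combination becomes reachable on a later swap. I would abstract this into a gadget: a pair of ``competing'' swaps $S_1,S_2$ whose order determines which of two agents obtains its ideal triple, wired so that the agent who loses the race could have fired a suppressing lie to reverse the order. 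The core construction is then to chain or superimpose several such gadgets so that \emph{every} blocking pair $\mathcal{F}$ could legally pick first sits on the losing side of some gadget, leaving no manipulation-free first move.

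The main obstacle is exactly this universal quantification over $\mathcal{F}$, because $\mathcal{F}$ may be adaptive and inspect the entire reported profile together with the induced tree of future swaps; a sufficiently clever rule could try to detect the ``unsafe'' choices and steer toward a canonical ``safe'' swap, as the $(a,c)$-first rule does in the example. Defeating all such rules requires controlling agents' best responses globally rather than locally, and I expect the argument to take one of two shapes: either a Gibbard--Satterthwaite--style impossibility showing that the outcome map induced by the swap dynamics is rich enough that no onto, strategy-proof selection can exist, or a counting argument over a parameterized family $\{I_k\}$ in which the number of genuinely distinct ``safe first swaps'' is forced to zero as $k$ grows. Establishing that the swap dynamics yield enough outcome variation to run either argument --- while staying inside the symmetric binary class and keeping the manipulating lie a single room flip --- is where the real work lies, and is presumably why the statement is posed as a conjecture rather than a theorem.
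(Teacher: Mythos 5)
First, a point of calibration: the statement you were asked to prove is posed in the paper as an \emph{open conjecture} (\Cref{con:nSP}); the paper offers no proof, only \Cref{ex:eg_SB_n_SP_2}, which exhibits a single instance on which several first-swap choices are manipulable but the $(a,c)$-first choice is safe. So there is no paper proof to compare against, and the correct standard is whether your proposal actually closes the gap. It does not: what you have written is an accurate diagnosis and a research plan, not a proof. You correctly identify the crux --- the universal quantification over selection rules $\mathcal{F}$, the fact that one fixed instance cannot work, and that the only manipulation lever under symmetry is flipping room values $v_i(r)\in\{0,1\}$ --- but the central object of the argument, namely either a family of instances $\{I_k\}$ on which every possible first swap sits on the losing side of a manipulation gadget, or a structural impossibility for the outcome map of \Cref{alg:Swapping}, is never constructed. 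The gadget is described only at the level of intent: you do not verify that chained or superimposed gadgets stay within the symmetric binary class, that the swap dynamics (which terminate by \Cref{the:SB_SW} but whose trajectory depends on $\mathcal{F}$ at every step, not just the first) actually route through the gadgets as intended, or that the suppressing lie is profitable \emph{net of all downstream swaps} it triggers --- the last point is delicate precisely because removing one blocking pair can create or reorder many others.

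Beyond incompleteness, one of your two suggested routes would likely fail as stated. A Gibbard--Satterthwaite-style impossibility requires an unrestricted (or sufficiently rich) preference domain and a range of at least three outcomes; here the domain is severely restricted --- $h$ is pinned down by symmetry and truthfulness of the opponent's reports, and each agent controls only a binary vector of room values --- and restricted domains routinely admit nondictatorial strategy-proof mechanisms, so invoking GS is not a step but a placeholder for a theorem that does not yet exist in this setting. The counting-over-$\{I_k\}$ route is the more plausible one, but it must also contend with \emph{adaptive} rules: $\mathcal{F}$ sees the full reported profile, so a rule may behave ``safely'' on every instance in your family by pattern-matching it, and you must show profitable deviations exist at the truthful profile of each $I_k$ for \emph{whatever} pair the rule picks there, including picks that only become available after earlier swaps. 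Your closing sentence concedes exactly this, which is honest --- but it means the conjecture remains open under your proposal, and the write-up should be presented as a proof strategy with identified obstacles, not as a proof.
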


\paragraph{Pareto Optimality} The following example shows that \Cref{alg:Swapping} is not Pareto optimal.
\begin{example}\label{ex:noPObinaryswap}
    For agents $a, b, c, d \in N$ and room $i, j \in R$, $h_b(c) = h_c(b) = 1$ and all other valuations not given are $0$. The initial assignment $\mu$ is $\{(a, b, i)$ $(c, d, j)\}$. According to the rules of \Cref{alg:Swapping}, no swap happens and the output assignment is still $\mu$. Every agent in assignment $\mu$ has the utility of 0. However, an assignment $\mu' = \{(b, c, i), (a, d, j)\}$ can give agent $b$ and agent $c$ better utilities of 1 without worsening any other agents. As a result, \Cref{alg:Swapping} is not Pareto optimal. 
\end{example}

\end{document}